\documentclass[12pt]{article} % For LaTeX2e
\usepackage{url}
\usepackage[colorlinks, citecolor={blue}]{hyperref}
\usepackage{amsfonts,amscd,amssymb}
\usepackage{amsthm,amsmath,natbib}
\usepackage{algorithmic,algorithm}
\usepackage{bm}
\usepackage{bbm} %bb font numbers
\usepackage{color}
\usepackage{verbatim}
\usepackage{graphicx}
\usepackage{setspace}
\usepackage{natbib}
\usepackage[margin=1in]{geometry}
\usepackage{enumitem}
\usepackage{enumerate}
\usepackage[textsize=tiny]{todonotes}
\usepackage[toc,page]{appendix}
\usepackage[mathlines]{lineno} 
%\linenumbers

\usepackage{tikz}
\usetikzlibrary{matrix}
\usetikzlibrary{backgrounds}
\usetikzlibrary{calc}
\usetikzlibrary{arrows,shapes}
\usetikzlibrary{decorations.pathmorphing}

\doublespacing

\newtheorem{theorem}{Theorem}
\newtheorem{lemma}{Lemma}

\newtheorem{assumption}{Assumption}

\newcommand{\argmax}{\operatornamewithlimits{argmax}}

\newcommand{\mb}{\mathbf}
\newcommand{\mbb}{\mathbb}
\newcommand{\mc}{\mathcal}
\newcommand{\bs}{\boldsymbol}
\newcommand{\E}{\mbb{E}}

\newcommand{\argmin}{\operatornamewithlimits{argmin}}

\title{Ancestral state reconstruction with large numbers of sequences and edge-length estimation}
\date{\today}
\author{
Lam Si Tung Ho \\
Department of Mathematics and Statistics \\
Dalhousie University, Halifax, Nova Scotia, Canada
\and
Edward Susko \\
Department of Mathematics and Statistics \\
Dalhousie University, Halifax, Nova Scotia, Canada
}

\pagestyle{myheadings}\markright{Ancestral Reconstruction without Edge Lengths}
\begin{document}
\maketitle

\begin{abstract}
Likelihood-based methods are widely considered the best approaches for reconstructing ancestral states. Although much effort has been made to study properties of these methods, previous works often assume that both the tree topology and edge lengths are known. In some scenarios the tree topology might be reasonably well known for the taxa under study. When sequence length is much smaller than the number of species, however, edge lengths are not likely to be accurately estimated. We study the consistency of the maximum likelihood and empirical Bayes estimators of ancestral state of discrete traits in such settings under a star tree. We prove that the likelihood-based reconstruction is consistent under symmetric models but can be inconsistent under non-symmetric models. We show, however, that a simple consistent estimator for the ancestral states is available under non-symmetric models. The results illustrate that likelihood methods can unexpectedly have undesirable properties as the number of sequences considered get very large. Broader implications of the results are discussed.

\end{abstract}

% \clearpage
\section{Introduction}

Ancestral state reconstruction is an important problem in evolutionary biology
\citep{maddison1994ancestral,felsenstein2004book,liberles2007book}.
Reconstructing the ancestral states helps answer many questions about macroevolution including the evolution of phenotypes \citep{finarelli2006ancestral, odom2014female} and the origin of epidemics \citep{lemey2009bayesian, faria2014early, gill2017relaxed}. Ancestral reconstruction has also been used to determine which types of substitutions frequently occur in pseudogenes \citep{gojobori1982pseudogenes} and to study the optimal growth
temperature of the Last Universal Common Ancestor \citep{gaucher2008temperature}. 
More broadly, the degree to which ancestral sequences can be accurately reconstructed is indirectly related to the efficiency of tree reconstruction methods, molecular dating and inference about adaptive evolution in molecular settings. This is because such methods, in effect, consider weighted averages of probabilities over ancestral sequences. 

With a relatively large number of sites and a small to moderate number of sequences, edge lengths and even tree topology can be estimated accurately from sequence data. For this reason and for simplicity, previous studies on the theory of ancestral state reconstruction methods often assume that the tree topology and edge lengths are known \citep{ane2008analysis, royer2013choosing, fan2018necessary, ho2019multi}. However, large numbers of sequences are increasingly available for a wide variety of species,
and phylogenies based on hundreds or thousands of taxa are becoming commonplace. Increasing the number of taxa gives more information but to avoid having large amounts of missing data, the number of characters considered often needs to be kept small. Some theoretical results are available in such settings and suggest challenges, particularly for edge-length estimation. For instance, for $n$ species, the required sequence length for accurately reconstruct the tree topology is a power of $\log n$ \citep{erdHos1999few, erdos1999fewII} while the required sequence length for reconstructing both the tree topology and edge lengths is a power of $n$ \citep{dinh2018consistency}. In such settings, treating edge-lengths as known is problematic yet little effort has been made to study the problem of ancestral state reconstruction without edge lengths, especially for likelihood-based methods. In this paper, we will focus on this problem for discrete traits. 

The simplest ancestral reconstruction method for discrete traits is Majority rule, which estimates the state at the root by the most frequent state appearing at the leaves.
Maximum parsimony, on the other hand, utilizes the information from the tree topology.
This method estimates the root value by minimizing the number of changes needed to explain the evolution of the character along the tree. Maximum parsimony can have strong biases in the presence of compositional bias, however 
\citep{collins1994compositional,eyre-walker1998parsimony}.
The maximum likelihood estimator (MLE) and Bayesian inference maximize the likelihood function and the posterior distribution respectively for reconstructing the ancestral state.
These likelihood-based methods employ the information from both the tree topology and edge lengths. Intuitively, utilizing more information can be expected to result in a more efficient estimation method. Moreover, in standard settings, likelihood methods are known to have the opimality property of being asymptotically minimum variance among approximately unbiased estimators \cite[\S 5.4.3]{bickel2007book}.
Therefore, it is not surprising that they are often considered the best approaches for ancestral state reconstruction. The setting considered here however is non-standard at least in that the number of parameters increase as the number of taxa increase. 

Consistency is often considered a base criterion for judging whether an estimation method is good or not.
An estimator is consistent if it converges to the true value as the number of observations increases to infinity. In the present setting, consistency arises if 
we can recover the true ancestral state when we have an infinite number of species.
When the tree topology and edge lengths are known, \citet{fan2018necessary} provides a necessary and sufficient condition, called ``big bang'', for the existence of a consistent estimator of the ancestral state on bounded-height trees.
It is worth noticing that a direct consequence of Proposition 6 in \citet{steel2008maximum} is that the MLE is consistent if there exists a consistent estimator for the ancestral state (assuming that the evolution model is known).
This result confirms that the MLE is a reasonable estimator of the ancestral state in this scenario and that the ``big bang'' condition is a necessary and sufficient condition for the consistency of the MLE.

A natural hypothesis is that likelihood-based methods are also the best ancestral state reconstruction methods when edge lengths are unknown.
To investigate this, we consider a simple scenario where discrete traits evolve along a star tree according to a proportional model. Although estimation of ancestral frequencies is usually of greatest interest under non-stationary models
\citep{susko2013ancestral}, we assume a simple stationary setting. We show that in this setting the MLE and Empirical Bayes estimator (sometimes also referred to as the Maximum A Posteriori (MAP) estimator) converge upon the same solution. Consequently, it suffices to consider the MLE. 
The MLE is shown to be consistent under symmetric models but there exists a zone of inconsistency under non-symmetric models.
As a consequence, the ``big bang'' condition in \citet{fan2018necessary} is no longer a sufficient condition for the consistency of the MLE.
We also uncover that when the edge lengths are unknown, the MLE is not the best ancestral reconstruction method.
Specifically, we present a simple new estimator for the ancestral state that is consistent under some mild conditions.
We show that the MLE is not always consistent under the same conditions.
Therefore, the proposed estimator is better than the MLE in this scenario.

\section{Settings}

Throughout this paper, we will focus on star trees whose edge lengths are unknown.
A star tree is a tree such that all taxa are direct descendants of the root (Figure \ref{fig:star-tree}).
Let $n$ be the number of leaves and $\mb t = (t_k)_{k=1}^n$ be the (unknown) edge lengths.
In this setting, we observe a sequence of $N$ sites at each leave.
We assume that characters at these sites evolve independently along the tree according to the \emph{proportional model}.
That is, the evolution of characters follows a finite-state continuous-time Markov process with the following transition probabilities
\[
P_{ij}(t) = \pi_j [1 - \exp (-\mu t)] + 1_{\{i = j\}}\exp (-\mu t), \quad \forall i,j \in \{1,2, \ldots,  c\}.
\]
Here, $c$ is the number of possible states and $\bs \pi = (\pi_1, \pi_2, \ldots, \pi_c)$ is the stationary distribution of the process. In practice, to avoid problems of confounding, $\mu=[\sum_j \pi_j (1-\pi_j)]^{-1}$ is used so that edge lengths are interpretable as expected numbers of substitutions. We consider the re-parameterization $\mb s = \exp[-\mu \mb t]$.
So, the transition probabilities become
\[
P_{ij}(s) = \pi_j [1 - s] + 1_{\{i = j\}}s.
\]

\begin{figure}[t]
  \centering
  \includegraphics[width=0.5\textwidth]{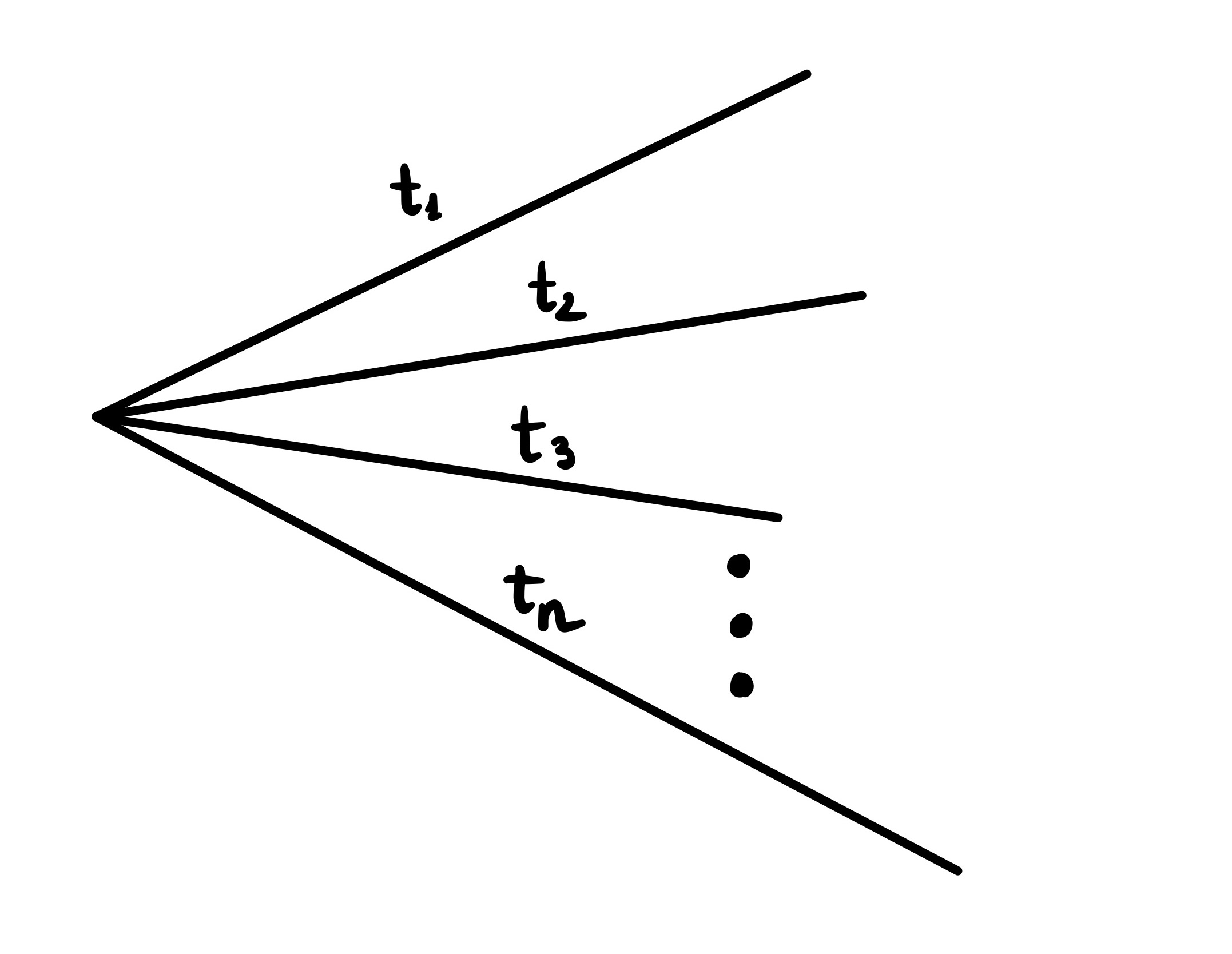}
\caption{A star tree.}
\label{fig:star-tree}
\end{figure}

We are interested in reconstructing the ancestral states $\bs \rho^*$ at the root of the tree.
Without loss of generality, we can assume that
\[
\bs \rho^*=(\underbrace{1,\dots,1}_{N_1},\dots, \underbrace{c,\dots,c}_{N_c}).
\]
When the edge lengths are unknown, $(\bs \rho^*, \bs \pi)$ are not identifiable under the proportional model \citep{gascuel2020darwinian}.
Therefore, unless mentioned otherwise, we assume that the stationary distribution $\bs \pi>\bs 0$ is known. The setting is a proxy for the situation where frequencies can be estimated from a larger tree and the star portion is a local subtree of focus for ancestral reconstruction. 
We also make the following assumption.
\begin{assumption}
Define
\[
\bar{s} = \frac{1}{n}\sum_{k=1}^n s_k.
\]
%We assume that every convergent subsequence of $\bar{s}$ is a positive number.
We assume that $\lim\inf \bar{s} > 0$.
\label{assump:lims}
\end{assumption}

\noindent 
The reason for Assumption \ref{assump:lims} is to guarantee that the majority of edge lengths are not so large ($s=0$ corresponds to $t=\infty$) that the tip data is almost independent of the root data. A trivial scenario where Assumption \ref{assump:lims} is satisfied is when $t_k = t $ for all $k \in \mbb{N}$. We describe a non-trivial example where these assumptions hold in the following Lemma.

\begin{lemma}
Let $t_1, t_2, \ldots, t_n$ be independent and identically (iid) distributed random variables on $\mbb{R}^+$ with finite mean.
Then, the star tree with edge lengths $\mb t = (t_k)_{k=1}^n$ satisfies Assumption \ref{assump:lims}.
\label{lem:ex}
\end{lemma}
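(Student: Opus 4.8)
The plan is to observe that, although the $t_k$ are random, Assumption~\ref{assump:lims} is a statement about the realized sequence of edge lengths, so it suffices to show that $\liminf_{n\to\infty}\bar{s}>0$ holds almost surely. The key point is that the reparameterized lengths $s_k=\exp(-\mu t_k)$ are themselves iid and, crucially, take values in $(0,1]$; hence they are bounded, and in particular integrable, regardless of any moment assumption on the $t_k$ themselves.

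Given this, the argument is a one-line application of the strong law of large numbers. First I would note that $\bar{s}=n^{-1}\sum_{k=1}^n s_k$ is the empirical mean of the iid bounded variables $s_k$, so by the SLLN $\bar{s}\to\E[s_1]=\E[\exp(-\mu t_1)]$ almost surely. Next I would check that this limit is strictly positive: since $\exp(-\mu t_1)>0$ everywhere on $\mbb{R}^+$, the variable $s_1$ is positive almost surely, so $\E[s_1]>0$ (were it zero, $s_1$ would vanish a.s., a contradiction); concretely, $\E[s_1]\ge \exp(-\mu M)\,\mbb{P}(t_1\le M)>0$ for any $M>0$ with $\mbb{P}(t_1\le M)>0$, which exists because $t_1$ is finite a.s. Combining, $\liminf_{n}\bar{s}=\lim_{n}\bar{s}=\E[\exp(-\mu t_1)]>0$ almost surely, which is exactly Assumption~\ref{assump:lims}.

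There is essentially no hard step here; the only things to be careful about are interpreting the conclusion as an almost-sure statement about the random sequence $\mb t$, and noting that the finite-mean hypothesis on the $t_k$, while a natural regularity condition, is not actually used for this particular conclusion since boundedness of the $s_k$ already supplies the integrability the SLLN needs. (If one only wanted convergence of $\bar{s}$ in probability, the weak law would do equally well.)
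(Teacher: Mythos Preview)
Your proposal is correct and takes essentially the same approach as the paper: both argue that the $s_k=\exp(-\mu t_k)$ are iid and bounded in $(0,1)$, apply the Strong Law of Large Numbers to get $\bar s\to \E[s_1]$ almost surely, and note that $\E[s_1]>0$ since $s_1>0$ a.s. Your additional remark that the finite-mean hypothesis on the $t_k$ is not actually used---boundedness of the $s_k$ already guarantees integrability---is a valid observation that the paper does not make explicit.
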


For two sequences $\bs \rho = (\rho_1, \rho_2, \ldots, \rho_N)$ and $\mb y = (y_1, y_2, \ldots, y_N)$, we define
\[
P_{\bs \rho \mb y}(s) = \prod_{l = 1}^N{P_{\rho_l y_l}(s)}, \quad \text{and} \quad \bar{P}_{\bs \rho \mb y} = \frac{1}{n}\sum_{k=1}^n{P_{\bs \rho \mb y}(s_k)}.
\]
Note that $\bar{s^k}=\int s^k ~ dF_{n}(s)$ where $F_n$ is the distribution function placing weight $1/n$ on $s_1,\dots, s_n$. A sequence of distribution functions always has a convergent subsequence. We consider a convergent subsequence in what follows. For such a subsequence, because $s\in [0,1]$, $\lim_n \bar{s^k}$ exists. Since $\bar P_{\bs \rho^* \mb y}$ is a linear transformation of the $\bar{s^k}$, with fixed coefficients that do not change with $n$, its limit exists for this convergent subsequence. The linear transformation $\bar P_{\bs \rho^* \mb y}$ has positive coefficients. Thus this limit, which we denote $p_{\bs \rho^* \mb y},$ is also positive.

The restriction to a convergent subsequence is without loss of generality. In cases where inconsistency is shown for a convergent subsequence it is implied for the original sequence. In cases where consistency is shown, because it holds for an arbitrary subsequence, by the Helly subsequence principle it holds for the original sequence. 
Therefore, without loss of generality, we can assume that $\bar s$ converges to a positive number and $\bar P_{\bs \rho^* \mb y}$ converges to $p_{\bs \rho^* \mb y}$.

% \begin{assumption}
% For any sequence $\mb y = (y_1, y_2, \ldots, y_N)$, we assume that
% \[
% \lim_n \bar{P}_{\bs \rho^* \mb y} = p_{\bs \rho^* \mb y} > 0.
% \]
% \label{assump:prob}
% \end{assumption}

%\begin{itemize}
%\item $k$ indexes leaves
%\item $i,j$ index characters
%\item $l, v$ index the sites
%\end{itemize}

% The main idea of Assumptions \ref{assump:lims} and \ref{assump:prob} is to guarantee that the edge lengths are not too irregular.

\section{Ancestral state reconstruction}

The majority rule and maximum parsimony reconstruction methods have been studied extensively in the scenario where edge lengths are unknown because these methods do not take into account this information \citep{maddison1995calculating, gascuel2010inferring, mossel2014majority, herbst2018accuracy, herbst2019quantifying}.
On the other hand, little is known about asymptotic properties of likelihood-based methods in this scenario.

\subsection{Maximum likelihood estimator}

Given its good performance in a wide variety of settings, it a reasonable hypothesis that the MLE is a good method for reconstructing the ancestral state.
Since the edge lengths are unknown, we need to estimate both the ancestral states and edge lengths jointly.
Let $\mb y_k$ be the sequence observed at the $k$-th leave.
The MLE is defined by
\[
\hat{\bs \rho} = \argmax_{\bs \rho} \left ( \max_{\mb s} \sum_{k=1}^n{\log P_{\bs \rho \mb y_k}(s_k)} \right ).
\]
For an $N$-dimensional pattern $\mb y$, let $n_{\mb y}$ be the number of times $\mb y$ is observed at the leaves.
That is, 
\[
n_{\mb y} = \sum_{k=1}^n{1_{\{\mb y_k = \mb y \}}}.
\]
We define
\[
\ell(\bs \rho)  = \sum_{\mb y} n_{\mb y} \log P_{\bs \rho \mb y}(\hat s(\bs \rho, \mb y))
\]
where the sum is over all possible patterns, and $\hat s(\bs \rho, \mb y) = \argmax_s P_{\bs \rho \mb y}(s)$. 
Then, we have
\[
\hat{\bs \rho} = \argmax_{\bs \rho} \frac{1}{n}\ell(\bs \rho).
\]
Then
\begin{equation}
\E \left [\frac{1}{n}\ell(\bs \rho) \right ] = \sum_{\mb y} {\bar{P}_{\bs \rho^* \mb y} \log P_{\bs \rho \mb y} (\hat s(\bs \rho, \mb y))} \to \sum_{\mb y} {p_{\bs \rho^* \mb y} \log P_{\bs \rho \mb y} (\hat s(\bs \rho, \mb y))}.
\label{eqn:func}
\end{equation}
We define
\[
e(\bs \rho) = \sum_{\mb y} {p_{\bs \rho^* \mb y} \log P_{\bs \rho \mb y} (\hat s(\bs \rho, \mb y))}.
\]

The limit \eqref{eqn:func} prompts an immediate question: Is there connection between the MLE $\hat{\bs \rho}$ and the function $e(\bs \rho)$?
The answer is yes.
When the number of leaves is large, the MLE is a maximizer of $e(\bs \rho)$ with high probability.
Specifically,

\begin{lemma}
Let $\mc{H}$ be the set of maximum points of $e(\bs \rho)$. Then, under Assumption \ref{assump:lims},  
\[
\lim_{n \to \infty}\Pr(\hat{\bs \rho} \in \mc{H}) = 1.
\] 
\label{lem:MLE}
\end{lemma}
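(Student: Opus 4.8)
The plan is to run the standard argmax-consistency ($M$-estimation) argument, exploiting the fact that the ancestral sequence $\bs\rho$ ranges over the \emph{finite} set $\{1,\dots,c\}^N$, so that a finite union bound replaces any need for a uniform law of large numbers over a continuum. In outline: first I would show that $\frac1n\ell(\bs\rho)$ converges in probability to $e(\bs\rho)$ for each fixed $\bs\rho$; then upgrade this to $\max_{\bs\rho}\bigl|\frac1n\ell(\bs\rho)-e(\bs\rho)\bigr|\to 0$ in probability by union-bounding over the $c^N$ sequences; and finally combine uniform convergence with the strictly positive gap $\min_{\bs\rho\notin\mc{H}}\bigl(\max_{\bs\rho'}e(\bs\rho')-e(\bs\rho)\bigr)$ to force the maximizer $\hat{\bs\rho}$ of $\frac1n\ell$ into $\mc{H}$ with probability tending to one.

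For the pointwise step I would write $\frac1n\ell(\bs\rho)=\sum_{\mb y}\frac{n_{\mb y}}{n}\,c_{\bs\rho\mb y}$ with $c_{\bs\rho\mb y}:=\log P_{\bs\rho\mb y}(\hat s(\bs\rho,\mb y))$, a constant that does not depend on $n$, the sum running over the finitely many patterns $\mb y$. Two ingredients are needed. First, $\frac{n_{\mb y}}{n}\to p_{\bs\rho^*\mb y}$ in probability: the indicators $1_{\{\mb y_k=\mb y\}}$ are independent and bounded with $\E[n_{\mb y}/n]=\bar P_{\bs\rho^*\mb y}$ and $\var(n_{\mb y}/n)\le 1/(4n)$, so Chebyshev gives $n_{\mb y}/n-\bar P_{\bs\rho^*\mb y}\to 0$, while $\bar P_{\bs\rho^*\mb y}\to p_{\bs\rho^*\mb y}$ by the reduction in Section~2. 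Second, the $c_{\bs\rho\mb y}$ are finite and bounded, so multiplying by them and summing the finitely many patterns preserves the limit: since $\bs\pi>\bs 0$, each factor $P_{\rho_l y_l}(s)=\pi_{y_l}(1-s)+1_{\{\rho_l=y_l\}}s$ is strictly positive on $[0,1)$, hence $P_{\bs\rho\mb y}(\hat s(\bs\rho,\mb y))\ge P_{\bs\rho\mb y}(0)=\prod_l\pi_{y_l}>0$, while $P_{\bs\rho\mb y}(s)\le 1$ always, so $c_{\bs\rho\mb y}\in[\sum_l\log\pi_{y_l},0]$ (this also shows $e(\bs\rho)$ is finite). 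This yields $\frac1n\ell(\bs\rho)\to e(\bs\rho)$ in probability, and then $\Pr\bigl(\max_{\bs\rho}|\tfrac1n\ell(\bs\rho)-e(\bs\rho)|>\delta\bigr)\le\sum_{\bs\rho}\Pr\bigl(|\tfrac1n\ell(\bs\rho)-e(\bs\rho)|>\delta\bigr)\to 0$ for every $\delta>0$, since the index set is finite.

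The concluding step is the usual separation argument: if $\mc{H}=\{1,\dots,c\}^N$ there is nothing to prove; otherwise $\epsilon:=\min_{\bs\rho\notin\mc{H}}\bigl(\max_{\bs\rho'}e(\bs\rho')-e(\bs\rho)\bigr)>0$, being a minimum over a finite nonempty set. On the event $\{\max_{\bs\rho}|\tfrac1n\ell(\bs\rho)-e(\bs\rho)|<\epsilon/2\}$, for any $\bs\rho\notin\mc{H}$ and any $\bs\rho^\dagger\in\mc{H}$ one has $\frac1n\ell(\bs\rho)<e(\bs\rho)+\epsilon/2\le\max_{\bs\rho'}e(\bs\rho')-\epsilon/2=e(\bs\rho^\dagger)-\epsilon/2<\frac1n\ell(\bs\rho^\dagger)$, so $\hat{\bs\rho}\in\mc{H}$ on that event, whose probability tends to one; hence $\Pr(\hat{\bs\rho}\in\mc{H})\to 1$.

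The main obstacle — really the only step that is not bookkeeping — is the verification hidden in the second ingredient above that profiling out the unknown edge length does not drive the log-likelihood to $-\infty$, i.e.\ that $\max_{s\in[0,1]}P_{\bs\rho\mb y}(s)$ is bounded away from $0$ uniformly in $(\bs\rho,\mb y)$; this is exactly where positivity of the stationary distribution $\bs\pi$ is used. The law of large numbers for the triangular array of bounded independent indicators, the finite union bound, and the separation argument are all routine, and Assumption~\ref{assump:lims} enters only indirectly, through the Section~2 reduction that lets us treat $\bar P_{\bs\rho^*\mb y}$ as convergent to a positive limit $p_{\bs\rho^*\mb y}$.
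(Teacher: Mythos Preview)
Your proposal is correct and follows essentially the same route as the paper: establish $\ell(\bs\rho)/n\to_p e(\bs\rho)$ pointwise via a Chebyshev/variance argument, upgrade to uniformity over the finite set $\{1,\dots,c\}^N$ by a union bound, and conclude with the standard $\epsilon$-gap separation. The only cosmetic difference is that the paper computes $\var[\ell(\bs\rho)/n]$ directly from the multinomial covariance structure of the $n_{\mb y}$, whereas you bound each $\var(n_{\mb y}/n)$ separately and then multiply by the bounded constants $c_{\bs\rho\mb y}$; your explicit verification that $c_{\bs\rho\mb y}\in[\sum_l\log\pi_{y_l},0]$ via $\bs\pi>\bs 0$ is a nice touch that the paper leaves implicit.
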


\noindent 
A direct consequence of Lemma \ref{lem:MLE} is that if $\bs \rho^* \notin \mc{H}$, then the MLE is inconsistent.
Another one is that if $e(\bs \rho)$ have a unique maximizer $\bs \rho_M$, then the MLE of the ancestral sequence converges to $\bs \rho_M$.
% For simplicity, we assume that the maximizer $\bs \rho_M$ of $e(\bs \rho)$ is unique.

% \begin{assumption}
% Let $\mc{H}$ be the set of maximum points of $e(\bs \rho)$.
% We assume that $\mc{H} = \{ \bs \rho_M \}$.
% \label{assump:unique}
% \end{assumption}

% \noindent 
% Under Assumption \ref{assump:unique}, we have the following property of $\bs \rho_M$.

\subsection{Empirical Bayes estimator}

The empirical Bayes estimator is more widely used in practice than the MLE. For estimation of edge-lengths, it uses the more conventional likelihood that averages over the unobserved sequences at internal nodes:
\[
\hat{\mb s} = \argmax_{\mb s} \sum_{\bs \rho} P_{\bs \pi}(\bs \rho) \prod_{k=1}^n P_{\bs \rho \mb y_k}(s_k)
\]
where $P_{\bs \pi}$ is the probability according to the stationary distribution $\bs \pi$. The estimator of the ancestral sequence is then obtained as the sequence having the largest conditional probability, given the data and calculated using the estimated edge-lengths:
\[
\hat{\bs \rho}_B = \argmax_{\bs \rho} P(\bs \rho \mid (\mb y_k)_{k=1}^n, \hat{\mb s}) = \argmax_{\bs \rho} P_{\bs \pi}(\bs \rho) \prod_{k=1}^n P_{\bs \rho \mb y_k}(\hat s_k) . 
\]
\noindent 
Here, the empirical Bayes estimator estimates edge lengths first, then we estimate ancestral states conditional on these estimated values.
Will this make any difference?
A short answer is no.
Just like the MLE, the empirical Bayes estimator will eventually be in the set $\mc{H}$ of maximizers of $e(\bs \rho)$.
\begin{theorem}
Under Assumption \ref{assump:lims},
\[
\lim_{n \to \infty}\Pr(\hat{\bs \rho}_B \in \mc{H}) = 1.
\] 
\label{thm:EmB}
\end{theorem}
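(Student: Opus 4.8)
\medskip
\noindent\textbf{Proof strategy.}
The plan is to reduce the statement to the argument already used for Lemma~\ref{lem:MLE}. Write $L_B(\bs\rho)=\log P_{\bs\pi}(\bs\rho)+\sum_{k=1}^n\log P_{\bs\rho\mb y_k}(\hat s_k)$, so that $\hat{\bs\rho}_B=\argmax_{\bs\rho}\frac1n L_B(\bs\rho)$. Since $\bs\pi>\bs 0$ and $N$ is fixed, $|\frac1n\log P_{\bs\pi}(\bs\rho)|\le\frac{N}{n}|\log\min_j\pi_j|\to0$ uniformly in $\bs\rho$, so the prior term is asymptotically negligible and everything is driven by $\frac1n\sum_k\log P_{\bs\rho\mb y_k}(\hat s_k)$. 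I will squeeze $\frac1n L_B(\bs\rho)$ between $e(\bs\rho)$ from above, for each fixed $\bs\rho$, and $\max_{\bs\rho}e(\bs\rho)$ from below at the maximizing $\bs\rho$, and then finish exactly as for the MLE.

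For the upper bound, note that \emph{whatever} the estimate $\hat{\mb s}$ is, each leaf can do no better than its own per-leaf profile value: $\log P_{\bs\rho\mb y_k}(\hat s_k)\le\max_s\log P_{\bs\rho\mb y_k}(s)=\log P_{\bs\rho\mb y_k}(\hat s(\bs\rho,\mb y_k))$. Summing over $k$ and using $\log P_{\bs\pi}(\bs\rho)\le0$ gives $\frac1n L_B(\bs\rho)\le\frac1n\ell(\bs\rho)$ deterministically. By the same computation as in the proof of Lemma~\ref{lem:MLE} — the mean of $\frac1n\ell(\bs\rho)$ converges to $e(\bs\rho)$ by \eqref{eqn:func}, and its variance is $O(1/n)$ because $\log P_{\bs\rho\mb y}(\hat s(\bs\rho,\mb y))\in[N\log\min_j\pi_j,\,0]$ — one has $\frac1n\ell(\bs\rho)\to e(\bs\rho)$ in probability, uniformly over the finite set of candidate sequences. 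Hence $\limsup_n\frac1n L_B(\bs\rho)\le e(\bs\rho)$ in probability for every $\bs\rho$.

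For the lower bound, the key observation is that the marginal edge-length likelihood $M(\mb s)=\sum_{\bs\rho}P_{\bs\pi}(\bs\rho)\prod_k P_{\bs\rho\mb y_k}(s_k)$ is a sum of $c^N$ nonnegative terms, hence is sandwiched between its largest term and $c^N$ times its largest term; after applying $\frac1n\log$, the factor $c^N$ costs only $O(1/n)$ because it does not grow with $n$. Since $\hat{\mb s}$ maximizes $M$, this forces $\hat{\mb s}$ to be an almost-maximizer of $\mb s\mapsto\max_{\bs\rho}[\log P_{\bs\pi}(\bs\rho)+\sum_k\log P_{\bs\rho\mb y_k}(s_k)]$, whose maximal value equals $\max_{\bs\rho}[\log P_{\bs\pi}(\bs\rho)+\ell(\bs\rho)]$ (exchange the two maxima; for fixed $\bs\rho$ the supremum over $\mb s$ separates over leaves and is attained at the per-leaf profile values). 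Since $\max_{\bs\rho}L_B(\bs\rho)$ is by definition this same max-over-$\bs\rho$ evaluated at $\hat{\mb s}$, it follows that $\max_{\bs\rho}\frac1n L_B(\bs\rho)\ge\frac1n\max_{\bs\rho}\ell(\bs\rho)-\frac{C}{n}$ for a constant $C=N(\log c+|\log\min_j\pi_j|)$ independent of $n$; by the uniform convergence of the previous paragraph the right-hand side tends in probability to $e^\ast:=\max_{\bs\rho}e(\bs\rho)$.

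To conclude, set $e^{\ast\ast}=\max_{\bs\rho\notin\mc{H}}e(\bs\rho)<e^\ast$ (if $\mc{H}$ already contains every sequence the statement is trivial) and $\varepsilon=(e^\ast-e^{\ast\ast})/3>0$. With probability tending to one, $\max_{\bs\rho}\frac1n L_B(\bs\rho)>e^\ast-\varepsilon$, while for each of the finitely many $\bs\rho\notin\mc{H}$ we have $\frac1n L_B(\bs\rho)\le\frac1n\ell(\bs\rho)<e^{\ast\ast}+\varepsilon=e^\ast-2\varepsilon$; on this event $\hat{\bs\rho}_B$ cannot equal any $\bs\rho\notin\mc{H}$, so $\hat{\bs\rho}_B\in\mc{H}$. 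I expect the only genuinely new difficulty, relative to Lemma~\ref{lem:MLE}, to be the lower-bound step: unlike the profile MLE, the empirical Bayes edge-length estimate $\hat{\mb s}$ maximizes an \emph{average} over root states and is therefore coupled across all leaves, so its limiting behaviour is not transparent. The device that resolves this is the elementary bound that a sum of $c^N$ positive terms lies within a constant factor of its largest term, which makes $M(\mb s)$ indistinguishable, on the $\frac1n\log$ scale, from the joint maximum likelihood over $(\bs\rho,\mb s)$ — the quantity already controlled in the proof of Lemma~\ref{lem:MLE}.
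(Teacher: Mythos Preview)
Your proof is correct. Both your argument and the paper's rest on the same two pillars: the per-leaf profile inequality $\log P_{\bs\rho\,\mb y_k}(\hat s_k)\le\log P_{\bs\rho\,\mb y_k}(\hat s(\bs\rho,\mb y_k))$ for the upper bound, and the fact that $\hat{\mb s}$ maximizes the marginal $M(\mb s)=\sum_{\bs\rho}P_{\bs\pi}(\bs\rho)\prod_k P_{\bs\rho\,\mb y_k}(s_k)$ for the lower bound. The executions differ in the lower-bound step. The paper works on the probability scale: it fixes $\bs\rho_M\in\mc H$, forms the ratio $T_1/T_2$ of $P_{\bs\pi}(\bs\rho)\prod_k P_{\bs\rho\,\mb y_k}(\hat s_k)$ to the corresponding quantity at $\bs\rho_M$, and bounds $T_2$ from below by plugging the specific feasible point $\mb s'=(\hat s(\bs\rho_M,\mb y_k))_k$ into $M$, then splitting the resulting sum term by term. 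You instead stay on the $\tfrac1n\log$ scale and use the crude sandwich $\max_{\bs\rho}[\cdot]\le M(\mb s)\le c^N\max_{\bs\rho}[\cdot]$ to conclude directly that $\max_{\bs\rho}\tfrac1n L_B(\bs\rho)$ and $\tfrac1n\max_{\bs\rho}\ell(\bs\rho)$ differ by at most a constant over $n$. Your device is slightly more elementary and sidesteps the bookkeeping in the paper's $T_2$ bound; the paper's route, in exchange, never needs the factor-$c^N$ inequality and remains closer to the posterior ratio that actually defines $\hat{\bs\rho}_B$.
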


\noindent
As a consequence, both the empirical Bayes estimator and MLE will be consistent if $\mc{H}=\{\bs \rho^*\}$ and will be inconsistent if $\bs \rho^*\notin \mc{H}$. 

For a fixed ancestral sequence $\bs \rho$ and an observed sequence $\mb y$, let $n_{\bs \rho \mb y}(i,j)$ be the number of times the putative ancestral state was $\rho_l=i$ with corresponding observed value equal to $y_l=j$. 
We have 
\begin{align} 
\nonumber
\log P_{\bs \rho \mb y}(s)&=\sum_{i = 1}^c \sum_{j=1}^c n_{\bs \rho \mb y}(i,j) \log P_{i j}(s) \\
\nonumber
&=\sum_{i=1}^c n_{\bs \rho \mb y}(i,i) \log[\pi_{i}+ (1-\pi_{i})s] + \sum_{i=1}^c\sum_{j \mid j\neq i} n_{\bs \rho \mb y}(i,j) \log[\pi_j (1-s)]\\
\nonumber
&= \sum_{i=1}^c n_{\bs \rho \mb y}(i,i) \log[\pi_{i}+ (1-\pi_{i})s] + \left[ N-\sum_{i=1}^c n_{\bs \rho \mb y}(i,i) \right ] \log(1-s)  \\
\nonumber
& + \sum_{i=1}^c\sum_{j=1}^c n_{\bs \rho \mb y}(i,j) \log(\pi_j) - \sum_{i=1}^c n_{\bs \rho \mb y}(i,i) \log(\pi_{i}).
\end{align}

The third term depends on $\bs \rho$ and $\mb y$ through $\sum_i  n_{\bs \rho \mb y}(i,j)$, which is the number of $y_l=j$. Since it depends on $\mb y$ alone, letting $C(\mb y) = \sum_{i=1}^c\sum_{j=1}^c n_{\bs \rho \mb y}(i,j) \log(\pi_j)$, we have that
\begin{align}
\nonumber
\log P_{\bs \rho \mb y}(s) &= \sum_{i=1}^c n_{\bs \rho \mb y}(i,i) \log[\pi_{i}+ (1-\pi_{i})s] + \left[ N-\sum_{i=1}^c n_{\bs \rho \mb y}(i,i) \right ] \log(1-s)  \\
\label{eq:prop1}
& - \sum_{i=1}^c n_{\bs \rho \mb y}(i,i) \log(\pi_{i}) + C(\mb y).
\end{align}

Maximizing $\log P_{\bs \rho \mb y}(s)$, we obtain:

\begin{lemma}
\begin{itemize}
\item If 
\[\sum_{i=1}^c \frac{n_{\bs \rho \mb y}(i,i)}{\pi_{i}} \le N,
\]
then $\hat s(\bs \rho, \mb y) = 0$.
\item If 
\[
\sum_{i=1}^c n_{\bs \rho \mb y}(i,i) = N,
\]
then $\hat s(\bs \rho, \mb y) = 1$.
\item Otherwise,
\[
\sum_{i=1}^c  \frac{n_{\bs \rho \mb y}(i,i)}{\pi_{i}+ (1-\pi_{i}) \hat s(\bs \rho, \mb y)} = N.
\]
\end{itemize}
\label{lem:MLEedge}
\end{lemma}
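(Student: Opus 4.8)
The plan is to start from \eqref{eq:prop1}: since its last two terms do not involve $s$, maximizing $\log P_{\bs\rho\mb y}(s)$ over $s\in[0,1]$ is the same as maximizing
\[
g(s) := \sum_{i=1}^c m_i\log[\pi_i+(1-\pi_i)s] + \Bigl(N-\sum_{i=1}^c m_i\Bigr)\log(1-s),
\]
where I abbreviate $m_i := n_{\bs\rho\mb y}(i,i)$. Because $\bs\pi>\bs 0$ and $\sum_j\pi_j=1$, each $\pi_i\in(0,1)$, so every function $s\mapsto\log[\pi_i+(1-\pi_i)s]$ is finite and strictly increasing on $[0,1]$. The first easy case is $\sum_i m_i=N$ (so $N\ge1$): then the $\log(1-s)$ term drops out, $g$ is a nonnegative combination of strictly increasing functions with at least one positive weight, hence strictly increasing on $[0,1]$, and $\hat s(\bs\rho,\mb y)=1$.

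Next I would treat $\sum_i m_i<N$. Here $N-\sum_i m_i>0$, so differentiating twice shows $g''(s)<0$ on $[0,1)$, i.e.\ $g$ is strictly concave there, while $g(s)\to-\infty$ as $s\to1^-$; thus the maximum over $[0,1]$ is attained at a unique point of $[0,1)$. A short computation gives $g'(0)=\sum_i m_i/\pi_i-N$. If $\sum_i m_i/\pi_i\le N$ then $g'(0)\le 0$, and strict concavity forces $g'<0$ throughout $(0,1)$, so $g$ is decreasing and the maximizer is $s=0$, giving $\hat s(\bs\rho,\mb y)=0$. If instead $\sum_i m_i/\pi_i>N$ then $g'(0)>0$ while $g\to-\infty$ at $1^-$, so by strict concavity there is a unique interior stationary point $\hat s\in(0,1)$, and it is the global maximizer.

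It then remains to put $g'(\hat s)=0$ into the stated form. Writing $d_i=\pi_i+(1-\pi_i)\hat s$, the identity $1-d_i=(1-\pi_i)(1-\hat s)$ lets me substitute $1-\pi_i=(1-d_i)/(1-\hat s)$ in each term of $g'$; multiplying the stationarity equation through by $1-\hat s$ and cancelling $\sum_i m_i$ from both sides collapses it to $\sum_i m_i/d_i=N$, which is the third bullet. Finally I would observe that the three listed conditions are exhaustive and, for $N\ge1$, mutually exclusive: since $\sum_i m_i\le N$ always (the diagonal counts cannot exceed the total number of sites), equality $\sum_i m_i=N$ forces $\sum_i m_i/\pi_i>N$ strictly (some $\pi_i<1$), so the second case is disjoint from the first, and ``otherwise'' is precisely $\sum_i m_i<N$ together with $\sum_i m_i/\pi_i>N$. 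I do not expect a genuine obstacle here; the only slightly delicate point is the algebraic rearrangement of the first-order condition via the identity $1-d_i=(1-\pi_i)(1-\hat s)$, together with the bookkeeping needed to keep the three regimes and their boundary cases straight.
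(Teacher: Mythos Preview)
Your proposal is correct and follows essentially the same approach as the paper: both arguments establish concavity via the second derivative, treat the case $\sum_i n_{\bs\rho\mb y}(i,i)=N$ separately as an increasing function, and in the remaining case use the sign of the derivative at $s=0$ together with concavity to locate the maximizer. The only cosmetic difference is that the paper first factors the derivative once and for all as $(1-s)^{-1}\bigl[\sum_i n_{\bs\rho\mb y}(i,i)/(\pi_i+(1-\pi_i)s)-N\bigr]$, which simultaneously yields the value at $s=0$ and the stationarity equation, whereas you carry out these two computations separately via the identity $1-d_i=(1-\pi_i)(1-\hat s)$; the underlying algebra is the same.
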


First, let us consider symmetric models, that is $\pi_1 = \pi_2 = \ldots = \pi_c = 1/c$.
In this scenario, the MLE for the ancestral states is consistent.
It is worth noticing that under symmetric models, the MLE and Maximum parsimony are the same when there is only $1$ site ($N = 1$) \citep{tuffley1997links}.

\begin{theorem}
Suppose that Assumption \ref{assump:lims} holds. For symmetric models, we have
\[
\lim_{n \to \infty}\Pr(\hat{\bs \rho} = \bs \rho^*) = 1.
\]
\label{thm:symmetric}
\end{theorem}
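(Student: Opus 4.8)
The plan is to combine Lemma~\ref{lem:MLE} with a direct analysis of the limiting objective $e(\bs\rho)$ specialized to the symmetric case: by that lemma it suffices to prove $\mc H=\{\bs\rho^*\}$, since then $\Pr(\hat{\bs\rho}=\bs\rho^*)=\Pr(\hat{\bs\rho}\in\mc H)\to 1$. Throughout we work along the convergent subsequence of empirical distributions $F_n$ of $s_1,\dots,s_n$ fixed in the Settings, with $F$ the corresponding weak limit; by the subsequence principle recorded there this loses no generality.

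The first step is to collapse all the per-site bookkeeping to a single integer. Substituting $\pi_i=1/c$ into \eqref{eq:prop1} and into Lemma~\ref{lem:MLEedge}, one sees that $\log P_{\bs\rho\mb y}(\hat s(\bs\rho,\mb y))$ depends on $(\bs\rho,\mb y)$ only through the number of matching sites $m=m(\bs\rho,\mb y):=\sum_{l}1_{\{\rho_l=y_l\}}=N-\Ham(\bs\rho,\mb y)$; write this common value as $g(m)$, so $e(\bs\rho)=\sum_{\mb y}p_{\bs\rho^*\mb y}\,g(m(\bs\rho,\mb y))$. A short computation establishes the only two properties of $g$ the argument needs: treating $m$ as continuous, an envelope-theorem identity gives $g'(m)=\log\frac{1+(c-1)\hat s}{1-\hat s}\ge 0$, with strict inequality exactly when $\hat s>0$ (i.e.\ when $m>N/c$), so $g$ is non-decreasing on $\{0,\dots,N\}$; and a direct evaluation gives $g(N)=0$ while $g(m)<0$ for every $m<N$ (this is where $c\ge 2$ enters).

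The second step is a conditioning-and-coupling argument. Writing $p_{\bs\rho^*\mb y}=\int\prod_l P_{\rho^*_l y_l}(s)\,dF(s)$ — valid because $\bar P_{\bs\rho^*\mb y}$ is a fixed finite linear combination of the moments $\bar{s^k}$, each of which converges along the chosen subsequence — we get $e(\bs\rho)=\int e_s(\bs\rho)\,dF(s)$, where $e_s(\bs\rho):=\sum_{\mb y}\big(\prod_l P_{\rho^*_l y_l}(s)\big)g(m(\bs\rho,\mb y))=\E\,g\big(\sum_l 1_{\{Y_l=\rho_l\}}\big)$ with $Y_1,\dots,Y_N$ independent, $\Pr(Y_l=j)=P_{\rho^*_l j}(s)$. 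For the true root each match indicator is $\mathrm{Bernoulli}\big(\tfrac{1+(c-1)s}{c}\big)$; for any $\bs\rho$ differing from $\bs\rho^*$ on a nonempty set $D$ of $d$ sites, the indicators off $D$ are unchanged while those on $D$ become $\mathrm{Bernoulli}\big(\tfrac{1-s}{c}\big)$, which for $s>0$ is stochastically smaller. Coupling the indicators site by site through common uniform thresholds makes $\sum_l 1_{\{Y_l=\rho^*_l\}}\ge\sum_l 1_{\{Y_l=\rho_l\}}$ pointwise, so monotonicity of $g$ yields $e_s(\bs\rho^*)\ge e_s(\bs\rho)$; and on the positive-probability event that all of $\bs\rho^*$'s indicators equal $1$ while the $D$-indicators of $\bs\rho$ equal $0$, the two counts are $N$ and $N-d$, so the integrand gap is at least $g(N)-g(N-d)>0$. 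Hence $e_s(\bs\rho^*)>e_s(\bs\rho)$ for every $s\in(0,1]$, whereas at $s=0$ all transition probabilities equal $1/c$ and $e_0$ is constant in $\bs\rho$ by symmetry. Since Assumption~\ref{assump:lims} forces $\int s\,dF=\lim\bar s>0$, hence $F((0,1])>0$, integrating gives $e(\bs\rho^*)>e(\bs\rho)$ for all $\bs\rho\neq\bs\rho^*$, i.e.\ $\mc H=\{\bs\rho^*\}$, and the theorem follows from Lemma~\ref{lem:MLE}.

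I expect the strictness to be the crux: stochastic domination alone only delivers $e_s(\bs\rho^*)\ge e_s(\bs\rho)$, so the real work is isolating an explicit event on which $g$ genuinely separates the two match counts — the event that $\bs\rho^*$ matches every observed site — and checking the strict inequality $g(N)>g(N-d)$, which rests on the identities $g(N)=0$ and $g(m)<0$ for $m<N$. A secondary point requiring care is the passage from $\bar P_{\bs\rho^*\mb y}$ to the integral against $F$, together with recording that $F$ is not the point mass at $0$.
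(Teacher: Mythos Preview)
Your proof is correct and follows the same overall strategy as the paper's: reduce to $\mc H=\{\bs\rho^*\}$ via Lemma~\ref{lem:MLE}, observe that in the symmetric case the profile log-likelihood depends on $(\bs\rho,\mb y)$ only through the match count and is nondecreasing in it, and then use stochastic domination of the match count under $\bs\rho^*$ versus any alternative, together with Assumption~\ref{assump:lims} to rule out all mass at $s=0$. The packaging differs in a few places. First, you pass immediately to the limiting edge-length distribution $F$ and write $e(\bs\rho)=\int e_s(\bs\rho)\,dF(s)$, isolating a pointwise comparison at each fixed $s$; the paper instead works with the finite averages $n^{-1}\sum_k h_{\bs\rho}(s_k)$ and closes with a compactness step (bounding below the fraction of $s_k\ge U$ and taking a minimum of $h_{\tilde{\bs\rho}}-h_{\bs\rho}$ over $[U,1]$). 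Second, you use an explicit uniform-threshold coupling on all sites at once, whereas the paper proves a one-coordinate monotonicity property of Bernoulli expectations (Lemma~\ref{lem:bern}) and changes one mismatched site at a time. Third, your strict inequality comes from the single event $\{M^*=N,\ M=N-d\}$ together with $g(N)=0>g(N-d)$, while the paper gets strictness from $g(x)-g(x_-)>0$ whenever the profile $\hat s>0$. The two routes are essentially equivalent; yours is a bit more compact once the integral representation is granted, while the paper's avoids invoking the weak limit $F$ altogether.
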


Theorem \ref{thm:symmetric} suggests that we may be able to reconstruct the ancestral states with high accuracy even when the edge lengths are unknown.
This result aligns with similar findings in \citet{gascuel2010inferring}.
Since the MLE is consistent under symmetric models, would it also work well under non-symmetric models?
Unfortunately, it is not the case.
To see this, let us consider a single site scenario (i.e. $N = 1$).
Suppose $\rho^* = r$.
For any $\rho = a$, we have
\[
\hat s(a, y) =
\begin{cases}
1 & \text{if } y = a \\
0 & \text{if } y \ne a,
\end{cases}
\quad
\text{and}
\quad
\log P_{\rho y}(\hat s(a, y)) =
\begin{cases}
0 & \text{if } y = a \\
\log(\pi_y) & \text{if } y \ne a,
\end{cases}
\]
By \eqref{eqn:func}, we have
\[
e(a) = \sum_{y \ne a} p_{ry} \log(\pi_y) = \left (\sum_{y} p_{ry} \log(\pi_y)  \right )- p_{ra} \log(\pi_a)
\]
By Lemma \ref{lem:MLE}, the MLE is inconsistent if
\begin{align}
\nonumber
e(r) < e(a) &\Longleftrightarrow  p_{ra} \log(\pi_a) < p_{rr} \log(\pi_r) \\
\nonumber
&\Longleftrightarrow \lim_n [1 - \bar s] \pi_a \log(\pi_a) < \lim_n [\pi_r + (1 - \pi_r) \bar s]\log(\pi_r) \\
\label{eqn:inconsistency_cond}
&\Longleftrightarrow \lim_n \bar s< \frac{\pi_r\log(\pi_r)-\pi_a\log(\pi_a)}{\pi_r\log(\pi_r)-\pi_a\log(\pi_a)-\log(\pi_r)}.
\end{align}

\noindent
It is worth noticing that \eqref{eqn:inconsistency_cond} cannot hold under symmetric models ($\pi_r = \pi_a$), which aligns with Theorem \ref{thm:symmetric}.
On the other hand, when $\pi_r\log(\pi_r) > \pi_a\log(\pi_a)$, \eqref{eqn:inconsistency_cond} holds with $\lim_n \bar s$ is sufficiently small.
Since the function $\pi\log(\pi)$ is maximized at $\pi=e^{-1}$, there exists a zone of inconsistency of the MLE unless $\pi_a=e^{-1}$.
We generalize this argument to obtain the zone of inconsistency for the case when we have more than $1$ site.

\begin{theorem}
Assume that $\rho^* = (r, \dots, r)$. Then, there exists $\lim_n \bar s > 0$ and $\bs \pi$ such that the MLE is inconsistent.
A sufficient condition for inconsistency is that $v(\pi_a) > v(\pi_r)$ where
\[
v(p)=E\left[\left\{~\hat p \log\bigg[\frac{\hat p}{p} \bigg]+
   (1-\hat p) \log\bigg[\frac{1-\hat p}{1-p}\bigg]~\right\}
   ~I\big\{\hat p > p \big\}\right] ,
   \]
calculated with $\hat p=X/N$ and $X\sim\mbox{binomial}(N,p).$ 
\label{thm:zone}
\end{theorem}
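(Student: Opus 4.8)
The plan is to use Lemma~\ref{lem:MLE}: since it asserts $\Pr(\hat{\bs\rho}\in\mc H)\to1$, it suffices to produce edge lengths obeying Assumption~\ref{assump:lims} and a stationary distribution $\bs\pi$ with $\bs\rho^*\notin\mc H$, because then $\Pr(\hat{\bs\rho}=\bs\rho^*)\le\Pr(\hat{\bs\rho}\notin\mc H)\to0$, i.e.\ the MLE is inconsistent. I would take all edge lengths equal to a constant $s\in(0,1)$, so that $\bar s\equiv s>0$ (Assumption~\ref{assump:lims} holds) and $p_{\bs\rho^*\mb y}=\prod_{l=1}^N P_{r y_l}(s)$ is the product measure under which the tip at each site equals $r$ with probability $\pi_r+(1-\pi_r)s$ and equals $j\ne r$ with probability $\pi_j(1-s)$. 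To show $\bs\rho^*\notin\mc H$ I would compare $e(\bs\rho^*)=e((r,\dots,r))$ against $e((a,\dots,a))$ for a suitable state $a\ne r$.

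For the constant sequence $\bs\rho=(a,\dots,a)$ only $n_{\bs\rho\mb y}(a,a)$ is nonzero; writing $\hat p=\hat p(\mb y)$ for the fraction of tip sites equal to $a$, Lemma~\ref{lem:MLEedge} gives $\hat s(\bs\rho,\mb y)=0$ when $\hat p\le\pi_a$ and $\hat s(\bs\rho,\mb y)=(\hat p-\pi_a)/(1-\pi_a)$ when $\hat p>\pi_a$, and substituting into \eqref{eq:prop1} yields, after routine simplification,
\[
\log P_{\bs\rho\mb y}(\hat s(\bs\rho,\mb y))=N\Big[\hat p\log\tfrac{\hat p}{\pi_a}+(1-\hat p)\log\tfrac{1-\hat p}{1-\pi_a}\Big]I\{\hat p>\pi_a\}+C(\mb y),
\]
and the same identity with $a$ replaced by $r$ holds for $\bs\rho^*$. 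Since $C(\mb y)$ is free of the ancestral sequence it cancels in $e((a,\dots,a))-e(\bs\rho^*)$, and under the product tip measure the number of tip sites equal to $a$ is $\mbox{binomial}(N,\pi_a(1-s))$ while the number equal to $r$ is $\mbox{binomial}(N,\pi_r+(1-\pi_r)s)$. Hence $N^{-1}\big(e((a,\dots,a))-e(\bs\rho^*)\big)$ is a difference of two binomial expectations, each a polynomial in the binomial parameter, whose parameters converge to $\pi_a$ and $\pi_r$ as $s\downarrow0$; since the Kullback--Leibler reference probabilities are precisely $\pi_a$ and $\pi_r$, passing to the limit gives exactly $v(\pi_a)-v(\pi_r)$. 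Therefore, if $v(\pi_a)>v(\pi_r)$ then for all small enough $s>0$ we get $e((a,\dots,a))>e(\bs\rho^*)$, so $\bs\rho^*\notin\mc H$ and the MLE is inconsistent.

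It remains to exhibit $\bs\pi$ with $v(\pi_a)>v(\pi_r)$ for some $a\ne r$. I would record that $v$ is continuous and strictly positive on $(0,1)$ (the event $\hat p=1>p$ has probability $p^N>0$ and contributes $-p^N\log p>0$) but $v(p)\to0$ as $p\downarrow0$ and as $p\uparrow1$; in particular $v$ is non-constant. Thus one can pick an interior $\pi_a$ with $v(\pi_a)>0$ together with a small $\pi_r$ satisfying $v(\pi_r)<v(\pi_a)$ and $\pi_a+\pi_r<1$, and take $c=3$ with $\bs\pi=(\pi_a,\pi_r,1-\pi_a-\pi_r)$. Combined with the previous paragraph this proves the existence claim; for $N=1$ one has $v(p)=-p\log p$, recovering the single-site condition derived just before the theorem.

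The main obstacle is the limit $s\downarrow0$, where the discontinuity of $I\{\hat p>\pi_a\}$ forces a genericity restriction: one should take $\pi_a$ and $\pi_r$ outside $\{0,1/N,\dots,1\}$, so that $\hat p$ never lands exactly on the threshold, the index set $\{x:x/N>\pi_a(1-s)\}$ is eventually constant, and the two binomial expectations really are polynomials in $s$ near $s=0$. A secondary point is verifying that the displayed closed form for $\log P_{\bs\rho\mb y}(\hat s)$ remains correct in the boundary cases $\hat p=\pi_a$ and $\hat p=1$ (with the convention $0\log0=0$). Once these are dispatched, the rest — the algebra reducing $e((a,\dots,a))-e(\bs\rho^*)$ to the binomial--Kullback--Leibler functional $v$, and the elementary behaviour of $v$ at the ends of $(0,1)$ — is routine.
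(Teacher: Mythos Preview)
Your argument is correct and mirrors the paper's: profile out $\hat s$ via Lemma~\ref{lem:MLEedge} for the constant ancestral sequence $(a,\dots,a)$, identify the expected profiled log-likelihood (minus $C(\mb y)$) as a binomial expectation with parameter $P_{ra}(s)$, and invoke continuity at $s=0$ to reduce to the condition $v(\pi_a)>v(\pi_r)$; your explicit verification that $v$ is non-constant on $(0,1)$ (hence such $\bs\pi$ exist) is a detail the paper leaves to the figure and the $N=1$ discussion. One correction: the ``main obstacle'' you flag is illusory, because the indicator threshold in your displayed formula is the fixed number $\pi_a$, not $\pi_a(1-s)$ --- the summation set $\{x:x/N>\pi_a\}$ does not vary with $s$, so the expectation is a polynomial in $s$ outright and no genericity restriction on $\pi_a,\pi_r$ is needed for the limit $s\downarrow 0$ to go through.
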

\noindent
Figure \ref{fig:hm-ancestral} visualizes the zone of inconsistency described in Theorem \ref{thm:zone}. Perhaps surprisingly, with a small number of sites, a preferred alternative ancestral state can have a smaller stationary frequency than the true ancestral state. As the sequence length gets larger, however, The region approaches the region for Parsimony where the alternative stationary frequency is larger than that of the true ancestral state. 

\begin{figure}[t]
  \centering
  \includegraphics[width=0.7\textwidth]{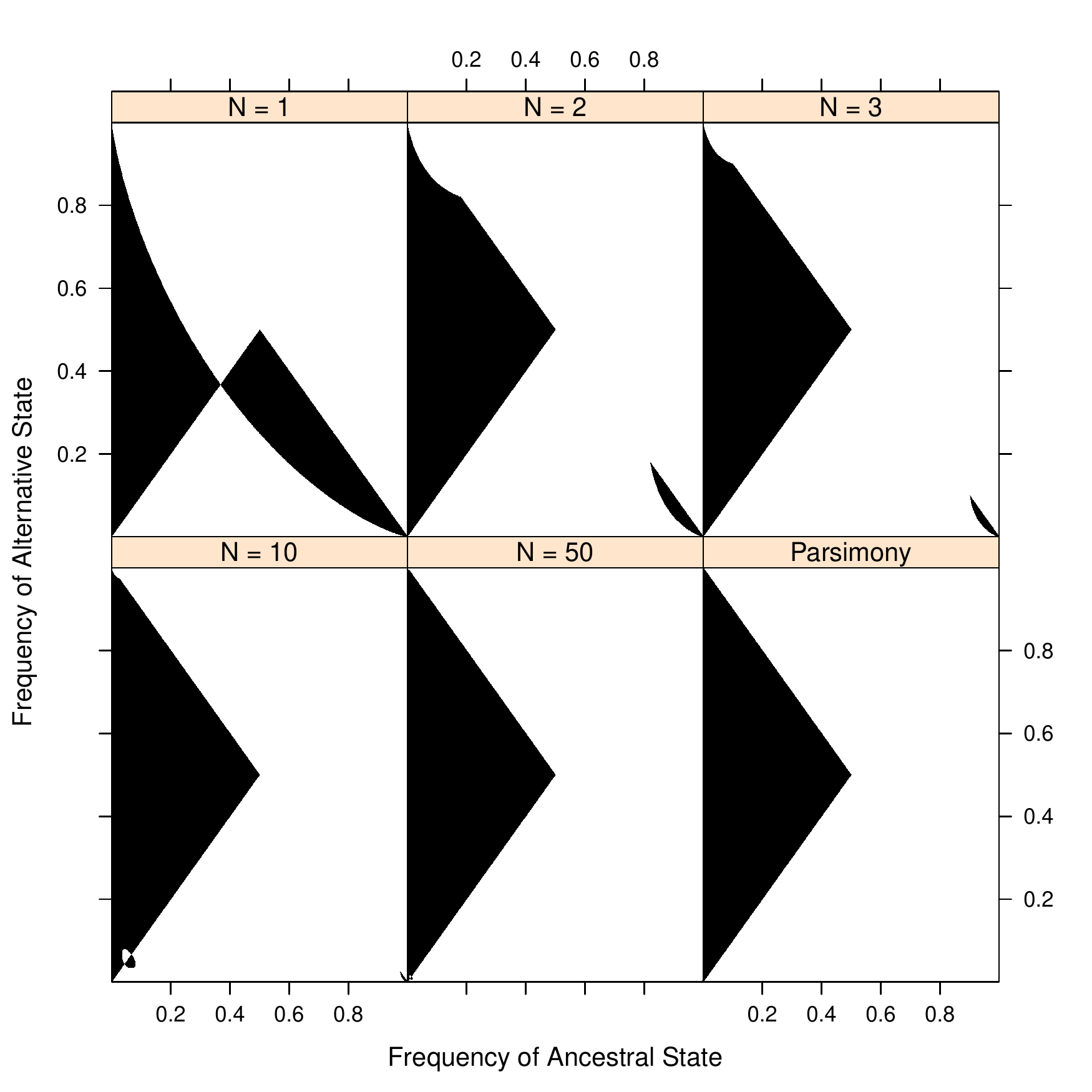}
\caption{Areas in black indicate frequencies of pairs of true ancestral character states and alternative character states for which there exists $\lim \bar s>0$ such that the alternative state is preferred to the true ancestral state.}
\label{fig:hm-ancestral}
\end{figure}

\subsection{A simple consistent estimator}

Although Theorem \ref{thm:zone} seems to eliminate our hope of reconstructing the ancestral states accurately under non-symmetric models, we note that this negative result was only shown to apply to the case when the ancestral state is the same for all sites.
However, such a case is rare in practice when the number of sites is large.
Assuming that there are  at least two distinct ancestral states, we propose a simple estimator that can estimate the ancestral states consistently.
Let $y_{kl}$ be the character state at site $l$ of the $k$-th species.
For $j \in \{ 1, 2, \ldots, c \}$, denote
\[
\hat \pi_{jl} = \frac{1}{n} \sum_{k=1}^n{I\{y_{kl}=j\}}, \quad V(j,l)=\hat\pi_{jl} - \sum_{v\neq l} \frac{\hat\pi_{jv}}{N-1}.
\]
We estimate the ancestral state at site $l$ by 
\[
\rho_l^{(D)} = \arg\max_{j} V(j,l).
\]
The motivation for this estimate is that for the proportional model, 
\begin{equation}
\label{eq:dchar-1}
E[\hat\pi_{jl}] = \frac{1}{n}\sum_{k=1}^n \left ( \pi_j - \pi_j s_k + I\{\rho_l^* = j\} s_k \right ) = \pi_j - \pi_j \bar s + I \{\rho_l^* = j\} \bar s.
\end{equation}
It follows that the observed frequency of $\rho_l^*$ at site $l$ is expected to be larger than the average observed frequency of $\rho_l^*$ over other sites, unless they all have $\rho_l^*$ as their ancestral character state. 

\begin{theorem}
If Assumption \ref{assump:lims} holds, $N = O(\sqrt{n/\log n})$, and there are at least two distinct ancestral states (that is, $\rho_l^*\neq \rho_1^*$ for some site $l$). 
Then, 
\[
\Pr(\bs \rho_l^{(D)} = \bs \rho^*) \to 1.
\]
\label{thm:distinct}
\end{theorem}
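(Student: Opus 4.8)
The plan is to reduce the statement to a concentration argument over the $N$ sites. First I would rewrite the statistic in a more transparent form. Writing $\tilde\pi_j = \frac1N\sum_{v=1}^N \hat\pi_{jv}$ for the overall observed frequency of state $j$ over all sites and leaves, a one-line computation using $\sum_{v\neq l}\hat\pi_{jv} = N\tilde\pi_j - \hat\pi_{jl}$ gives $V(j,l) = \frac{N}{N-1}\left(\hat\pi_{jl}-\tilde\pi_j\right)$, so that $\rho_l^{(D)} = \argmax_j\left(\hat\pi_{jl}-\tilde\pi_j\right)$. Hence it suffices to show that, with probability tending to one, $\hat\pi_{\rho_l^* l} - \tilde\pi_{\rho_l^*} > \hat\pi_{jl} - \tilde\pi_j$ for every site $l$ and every $j\neq\rho_l^*$ simultaneously.

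Next I would pin down the signal. Letting $m_j = \#\{v : \rho_v^*=j\}$ and using \eqref{eq:dchar-1} to evaluate $\E[\hat\pi_{jl}]$ and $\E[\tilde\pi_j]$, one finds that at a site $l$ with $r=\rho_l^*$ the mean separation from an alternative $j$ is $\E[V(r,l)-V(j,l)] = \frac{\bar s}{N-1}\left(N - m_r + m_j\right) \ge \frac{\bar s}{N-1}$. The hypothesis that at least two distinct ancestral states occur forces $m_r\le N-1$ for every state $r$ that actually appears, so this mean gap is strictly positive; combined with Assumption \ref{assump:lims}, for all large $n$ it is bounded below by $s_0/N$ for a fixed constant $s_0>0$. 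This is exactly the heuristic already flagged after \eqref{eq:dchar-1}.

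Then I would invoke concentration. For fixed $j$ and $l$, $\hat\pi_{jl}$ is an average of $n$ independent $[0,1]$-valued random variables (the leaves evolve independently), with mean given by \eqref{eq:dchar-1}; a Bernstein-type tail bound — sharper than Hoeffding, which I expect to matter for the constants — gives $\Pr\!\left(|\hat\pi_{jl}-\E[\hat\pi_{jl}]|\ge\eta\right)\le 2\exp(-c\,n\eta^2)$ for $\eta$ small. A union bound over the at most $Nc$ pairs $(j,l)$ then simultaneously controls every $\hat\pi_{jl}$, hence every $\tilde\pi_j$ and every $V(j,l)$, within $O(\eta)$ of its mean. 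Taking $\eta$ to be a small fixed fraction of $1/N$ (a lower bound for the mean gap above), on the event that all these deviations are small the estimator is correct at every site, so $\Pr(\rho^{(D)}\neq\rho^*) \le Nc\exp(-c'\,n/N^2)$. Substituting $N=O(\sqrt{n/\log n})$ makes $n/N^2$ of order $\log n$, and since also $N=O(\sqrt n)$ this bound is $O(Nc\,n^{-c''})\to 0$, which finishes the proof.

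The main obstacle is the shrinking signal: the expected separation between the true and a competing coordinate of $V(\cdot,l)$ is only of order $1/N$, so the noise has to be controlled at that scale at all $N$ sites at once. Balancing the union-bound cost (of order $\log N$) against the concentration rate (of order $n/N^2$) is precisely what dictates the threshold $N=O(\sqrt{n/\log n})$, and making the constants line up requires the variance-sensitive (Bernstein) inequality together with a careful choice of the cutoff $\eta$; the constant hidden in $O(\cdot)$ will depend on $\liminf\bar s$.
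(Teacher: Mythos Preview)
Your proposal is correct and essentially mirrors the paper's proof: establish the mean gap $E[V(\rho_l^*,l)] - E[V(j,l)] \ge \bar s/(N-1)$ from \eqref{eq:dchar-1}, apply a concentration inequality to an average of $n$ independent bounded summands, union-bound over the $O(Nc)$ comparisons to obtain $P[\bs\rho^{(D)}\neq\bs\rho^*] \le cN\exp(-c'\,n\bar s^2/N^2)$, and finish by substituting $N=O(\sqrt{n/\log n})$. The only real difference is that the paper applies Hoeffding directly to $V(j,l)=n^{-1}\sum_k V_k$ with $V_k\in[-1,1]$ rather than controlling each $\hat\pi_{jl}$ separately; contrary to your expectation, Bernstein is not needed here --- the summands have variance of order one, so Hoeffding already yields the required rate.
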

\noindent
A natural question is whether this Theorem also holds for the MLE.
Interestingly, this is not the case.
In other words, there exists a zone of inconsistency for the MLE.
Hence, the MLE is not the best ancestral reconstruction method when edge lengths are unknown and the number of sites is small relative to the number of sequences.

To see this, let us consider situation when the true ancestral states $\bs \rho^*$ has the form $(\rho^*_1, \rho^*_2 )$ such that $\rho^*_1 \ne \rho^*_2$.
Again, we want to investigate the set of maximum points of $e(\bs \rho)$.
For this simple scenario, we can derive an analytic formula for $E[\ell(\rho_1,\rho_2)/n]$.

\begin{lemma}
Denote 
\[
f(\pi)=I\{\pi< 1/2\}\log[4\pi(1-\pi)].
\]
We have
\[
E\left[\frac{\ell(\rho_1,\rho_2)}{n}\right] = C_0 - \bar P_{\rho_1^*\rho_1} f(\pi_{\rho_1}) - \bar P_{\rho_2^*\rho_2 } f(\pi_{\rho_2}) + \bar P_{(\rho_1^*, \rho_2^*) (\rho_1, \rho_2)} \{f(\pi_{\rho_1})+ f(\pi_{\rho_2})
                         -\log[\pi_{\rho_1}\pi_{\rho_2}]\}
\]
where 
\[
%C_0 = \sum_{x} \bar P_{\rho_1^* x } \log[\pi_x]+\sum_{y} \bar P_{\rho_2^* y }\log[\pi_y]
C_0 = \sum_{x} [\bar P_{\rho_1^* x } + \bar P_{\rho_2^* x}] \log[\pi_x]
\]  is independent of $(\rho_1,\rho_2)$. 
\label{lem:twosites}
\end{lemma}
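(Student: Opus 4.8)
The plan is to evaluate the $\bs\rho$-dependence of $E[\ell(\bs\rho)/n]=\sum_{\mb y}\bar P_{\bs\rho^*\mb y}\log P_{\bs\rho\mb y}(\hat s(\bs\rho,\mb y))$, the first identity in \eqref{eqn:func}, by specializing \eqref{eq:prop1} and Lemma \ref{lem:MLEedge} to $N=2$ and summing over the $c^2$ patterns $\mb y=(y_1,y_2)$. The organizing device is to separate the value of $\log P_{\bs\rho\mb y}$ at $s=0$ from the gain obtained by maximizing over $s$. Since $P_{ij}(0)=\pi_j$, one has $\log P_{\bs\rho\mb y}(0)=\log\pi_{y_1}+\log\pi_{y_2}=:C(\mb y)$, which does not involve $\bs\rho$; I write $\log P_{\bs\rho\mb y}(\hat s(\bs\rho,\mb y))=C(\mb y)+G(\bs\rho,\mb y)$, where $G$ is the gain. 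First I would sum the $C(\mb y)$ term: the key fact is that marginalizing out one coordinate of $\bar P_{(\rho_1^*,\rho_2^*)(y_1,y_2)}$ collapses it to $\bar P_{\rho_l^*y_l}$, because $\sum_y P_{\rho^* y}(s)=1$ for every $s$, and hence $\sum_{\mb y}\bar P_{\bs\rho^*\mb y}C(\mb y)=\sum_x[\bar P_{\rho_1^* x}+\bar P_{\rho_2^* x}]\log\pi_x=C_0$.

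Next I would compute $G(\bs\rho,\mb y)$ by classifying patterns according to the number of sites $l$ with $y_l=\rho_l$. If no site matches, $\sum_i n_{\bs\rho\mb y}(i,i)=0$, so Lemma \ref{lem:MLEedge} gives $\hat s=0$ and $G=0$. If exactly one site matches, with matching putative state $\rho_l$, then $\sum_i n_{\bs\rho\mb y}(i,i)/\pi_i=1/\pi_{\rho_l}$, which is $\le N=2$ precisely when $\pi_{\rho_l}\ge 1/2$; in that sub-case $\hat s=0$ and $G=0$, whereas if $\pi_{\rho_l}<1/2$ the interior optimality condition reads $\pi_{\rho_l}+(1-\pi_{\rho_l})\hat s=1/2$, so $1-\hat s=1/[2(1-\pi_{\rho_l})]$, and substituting into \eqref{eq:prop1} gives $G=-\log[4\pi_{\rho_l}(1-\pi_{\rho_l})]$; the two sub-cases combine exactly into $G=-f(\pi_{\rho_l})$. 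If both sites match, $\sum_i n_{\bs\rho\mb y}(i,i)=N$, so $\hat s=1$, $\log P_{\bs\rho\mb y}(1)=0$, and $G=-\log[\pi_{\rho_1}\pi_{\rho_2}]$.

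Finally I would sum $\sum_{\mb y}\bar P_{\bs\rho^*\mb y}G(\bs\rho,\mb y)$ over these classes using the same marginalization identity: the single-match-at-site-$1$ patterns contribute $-f(\pi_{\rho_1})\sum_{y_2\ne\rho_2}\bar P_{\bs\rho^*(\rho_1,y_2)}=-f(\pi_{\rho_1})[\bar P_{\rho_1^*\rho_1}-\bar P_{(\rho_1^*,\rho_2^*)(\rho_1,\rho_2)}]$, symmetrically for site $2$, and the unique both-match pattern $(\rho_1,\rho_2)$ contributes $-\log[\pi_{\rho_1}\pi_{\rho_2}]\,\bar P_{(\rho_1^*,\rho_2^*)(\rho_1,\rho_2)}$. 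Adding $C_0$ and collecting the coefficient of $\bar P_{(\rho_1^*,\rho_2^*)(\rho_1,\rho_2)}$ yields the asserted formula. I expect the only delicate step to be the boundary bookkeeping hidden in Lemma \ref{lem:MLEedge}: the indicator $I\{\pi<1/2\}$ inside $f$ is precisely what records whether the optimal edge-length parameter sits at the endpoint $0$ or in the interior, and one must use the convention $0\cdot\log 0=0$ in \eqref{eq:prop1} for the both-match term. It is also worth checking that the gain computation is unchanged when the candidate has $\rho_1=\rho_2$, since only the per-site diagonal counts $n_{\bs\rho\mb y}(i,i)$ enter; all remaining manipulations are routine substitution.
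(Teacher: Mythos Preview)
Your proposal is correct and follows essentially the same route as the paper's proof: both arguments classify the $c^2$ patterns by the number of sites at which $y_l=\rho_l$, use Lemma~\ref{lem:MLEedge} to compute $\hat s$ in each case, substitute into \eqref{eq:prop1}, and then marginalize over the unmatched coordinate(s) via $\sum_y P_{\rho^* y}(s)=1$. The only organizational difference is that you peel off the baseline $\log P_{\bs\rho\mb y}(0)=C(\mb y)$ at the outset (so $C_0$ appears immediately and you track only the gain $G$), whereas the paper carries the full contributions through the count variables $n_{xy},n^{(1)}_x,n^{(2)}_y$ and extracts $C_0$ at the end; this is a cosmetic difference, not a substantive one.
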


For the proportional model, $P_{xy}(s) =(\delta_{xy}-\pi_y) s + \pi_y$, so
\[
\bar P_{(\rho_1^*, \rho_2^*) (\rho_1, \rho_2)} = (\delta_{\rho_1^*\rho_1}-\pi_{\rho_1})(\delta_{\rho_2^*\rho_2}-\pi_{\rho_2}) \overline{s^2} + [(\delta_{\rho_1^*\rho_1}-\pi_{\rho_1})\pi_{\rho_2} + (\delta_{\rho_2^*\rho_2}-\pi_{\rho_2}) \pi_{\rho_1}] \bar s + \pi_{\rho_1} \pi_{\rho_2}
\]
where
\[
\overline{s^2} = \frac{1}{n}\sum_{k=1}^n s_k^2.
\]
By Lemma \ref{lem:twosites}, we have
\begin{equation}
E\left [\frac{\ell(\rho_1,\rho_2)}{n}\right] = C_0 + A(\rho_1,\rho_2) \overline{s^2} + B(\rho_1,\rho_2) \bar s + C(\rho_1,\rho_2)
\label{eq:ts-10}
\end{equation}
where
\begin{equation}
  \label{eq:ts-11}
  A(\rho_1,\rho_2) = (\delta_{\rho_1^*\rho_1}-\pi_{\rho_1})(\delta_{\rho_2^*\rho_2}-\pi_{\rho_2})
   \{f(\pi_{\rho_1})+ f(\pi_{\rho_2})-\log[\pi_{\rho_1}\pi_{\rho_2}]\}
\end{equation}
\begin{eqnarray}
\nonumber
  B(\rho_1,\rho_2) &=& -(\delta_{\rho_1^*\rho_1}-\pi_{\rho_1})f(\pi_{\rho_1})
-(\delta_{\rho_2^*\rho_2}-\pi_{\rho_2})f(\pi_{\rho_2})\\
\nonumber
&&~~~+ (\delta_{\rho_1^*\rho_1}-\pi_{\rho_1}) \pi_{\rho_2}
    \{f(\pi_{\rho_1})+ f(\pi_{\rho_2})-\log[\pi_{\rho_1}\pi_{\rho_2}]\}\\
  \label{eq:ts-12}
&&~~~+ (\delta_{\rho_2^*\rho_2}-\pi_{\rho_2})\pi_{\rho_1}
    \{f(\pi_{\rho_1})+ f(\pi_{\rho_2})-\log[\pi_{\rho_1}\pi_{\rho_2}]\}
\end{eqnarray}
and
\begin{equation}
  \label{eq:ts-13}
  C(\rho_1,\rho_2) = - \pi_{\rho_1}f(\pi_{\rho_1}) 
    - \pi_{\rho_2}f(\pi_{\rho_2}) 
+ \pi_{\rho_1}\pi_{\rho_2}\{f(\pi_{\rho_1})+ f(\pi_{\rho_2})-\log[\pi_{\rho_1}\pi_{\rho_2}]\}
\end{equation}
Thus the difference in expected log likelihoods $Q_{\rho_1\rho_2}(\mb s):=E[\ell(\rho_1,\rho_2)/n] -E[\ell(\rho_1^*,\rho_2^*)/n]$ satisfies that 
\[
Q_{\rho_1\rho_2}(\mb s) = A_D(\rho_1,\rho_2) \bar s^2 + B_D(\rho_1,\rho_2) \bar s + C_D(\rho_1,\rho_2),
\]
where $A_D(\rho_1,\rho_2) = A(\rho_1,\rho_2)-A(\rho_1^*,\rho_2^*), B_D(\rho_1,\rho_2) = B(\rho_1,\rho_2) - B_D(\rho_1^*,\rho_2^*), C_D(\rho_1,\rho_2) = C(\rho_1,\rho_2) - C(\rho_1^*,\rho_2^*)$.  
If $Q_{\rho_1\rho_2}(\mb 0) = C_D(\rho_1,\rho_2)>0$, then for $\overline{s^2}$ sufficiently small, the expected log likelihood for $(\rho_1,\rho_2)$ is larger than for $(\rho_1^*,\rho_2^*)$ and estimation is inconsistent. 
Note that $(\bar s)^2 \leq \overline{s^2}$.

For simplicity, let us focus on the case that all edge-lengths are equal.
That is, $s_k = s$ for $k = 1, 2 ,\ldots, n$.
In this situation,
\[
Q_{\rho_1\rho_2}(s) = A_D(\rho_1,\rho_2) s^2 + B_D(\rho_1,\rho_2) s + C_D(\rho_1,\rho_2).
\]
Note that $P_{\rho_j^*\rho_j}(1)=\delta_{\rho_j^*\rho_j}$.
So, Lemma \ref{lem:twosites} gives that $Q_{\rho_1\rho_2}(1)=\log[\pi_{\rho_1^*}\pi_{\rho_2^*}] < 0$ with $\rho_1^*\neq\rho_1$ and $\rho_2^*\neq\rho_2$.
If $\rho_1^*\neq\rho_1$ but $\rho_2^*=\rho_2$ then
$Q_{\rho_1\rho_2}(1)= - f(\pi_{\rho_2^*})+\log[\pi_{\rho_1^*}\pi_{\rho_2^*}].$ 
If $\pi_{\rho_2^*}\ge 1/2$, $f(\pi_{\rho_2^*})=0$ and $Q_{\rho_1\rho_2}(1)=\log[\pi_{\rho_1^*}\pi_{\rho_2^*}]<0$. If $\pi_{\rho_2^*}<1/2$, then 
% $$Q_{\rho_1\rho_2}(1)=-\log[4\pi_{\rho_2^*}(1-\pi_{\rho_2^*})] + \log[\pi_{\rho_1^*}\pi_{\rho_2^*}] = -\log[4(1-\pi_{\rho_2^*})]+ \log[\pi_{\rho_1}]< -\log[4(1-\pi_{\rho_2^*})] < 0.$$
$$Q_{\rho_1\rho_2}(1)=-\log[4\pi_{\rho_2^*}(1-\pi_{\rho_2^*})] + \log[\pi_{\rho_1^*}\pi_{\rho_2^*}]  < -\log[4(1-\pi_{\rho_2^*})] < 0.$$
Thus $Q_{\rho_1\rho_2}(1)<0$ for $\rho_1^*\neq\rho_1$ but $\rho_2^*=\rho_2$. By symmetry $Q_{\rho_1\rho_2}(1)<0$ for $\rho_1^*=\rho_1$ but $\rho_2^*\neq\rho_2$. In summary, $Q_{\rho_1\rho_2}(1)<0$ for $(\rho_1,\rho_2)\neq(\rho_1^*,\rho_2^*).$

If $Q_{\rho_1\rho_2}(0)>0$, then since $Q_{\rho_1\rho_2}(1)<0$ and $Q_{\rho_1\rho_2}(s)$ is quadratic, $Q_{\rho_1\rho_2}(s)$ will have exactly one root in $s^* \in [0,1].$ 
Moreover, for $s < s^*$, $Q_{\rho_1\rho_2}(s) > 0$ and $Q_{\rho_1\rho_2}(s) < 0$ for $s > s^*$. 
Thus, the MLE is inconsistent for $s < s^*$ whenever there exists $(\rho_1,\rho_2)$ with $Q_{\rho_1\rho_2}(0)>0$. 
In Table \ref{tab:ts-1} we tabulate the values of $\pi$ in the intersection of the unit simplex and $\{0.1,\dots,0.9\}^4$ for which the MLE can be inconsistent. 
The $(\rho_1,\rho_2)$ indicated are the ones giving the largest $Q_{\rho_1\rho_2}(0)$. 
In the case that all edge-lengths are equal, $s = \exp[-\mu t],$ so estimation is inconsistent $t > t^*:=-\log(s^*)/\mu.$

The above discussion assumed constant $s_k=s$. In the cases that the $s_k$ vary,
\[
Q_{\rho_1\rho_2}(\mb s) = A_D(\rho_1,\rho_2)[ \bar{s^2} -(\bar s^2)] + A_D(\rho_1,\rho_2)(\bar s)^2 + B_D(\rho_1,\rho_2) \bar s + C_D(\rho_1,\rho_2),
\]
If $\lim \bar s$ corresponds to a value of $s$ that gives rise to inconsistency in the single $s_k$ case, then for sufficiently small variation of the $s_k$ (small $\bar{s^2}-\bar s^2$) inconsistency will arise. However, in the calculations leading to Table \ref{tab:ts-1} we always found that $A_D(\rho_1,\rho_2)<0$. Since inconsistency only arises when $Q_{\rho_1\rho_2}(\mb s) >0$, the result suggests that trees with more variable edge-lengths are less likely to lead to inconsistency, all other things being equal. 

\begin{table}
  \centering
\caption{\label{tab:ts-1}The values of $\pi$ in the intersection of the unit simplex and $\{0.1,\dots,0.9\}^4$ for which the MLE can be inconsistent. Here $(\rho_1^*,\rho_2^*)=(A,C)$ and $(\rho_1,\rho_2)$ give ancestral character states having a larger likelihood than $(\rho_1^*,\rho_2^*)$ when all edge lengths are equal and the common edge length is larger than $t^*$. More generally, estimation is inconsistent for $ s < \exp[-t^*/\sum_i\pi_i(1-\pi_i)]$.}
  \begin{tabular}{llcllcccc}
$\pi_{A}$ & $\pi_{C}$ & &
$\pi_{G}$ & $\pi_{T}$ & &
$\rho_1$ & $\rho_2$ & $t^*$\\\hline
0.1 & 0.1 &~& 0.2 & 0.6 &~& G & T & 2.2 \\ 
0.1 & 0.1 &~& 0.3 & 0.5 &~& G & T & 2.3 \\ 
0.1 & 0.1 &~& 0.4 & 0.4 &~& G & T & 2.4 \\ 
0.1 & 0.2 &~& 0.1 & 0.6 &~& C & T & 2.3 \\ 
0.1 & 0.2 &~& 0.2 & 0.5 &~& C & T & 2.8 \\ 
0.1 & 0.2 &~& 0.3 & 0.4 &~& G & T & 2.6 \\ 
0.1 & 0.3 &~& 0.1 & 0.5 &~& C & T & 2.1 \\ 
0.1 & 0.3 &~& 0.2 & 0.4 &~& C & T & 2.4 \\ 
0.1 & 0.3 &~& 0.3 & 0.3 &~& C & G & 2.6 \\ 
0.2 & 0.2 &~& 0.1 & 0.5 &~& A & T & 3.4 \\ 
0.2 & 0.2 &~& 0.3 & 0.3 &~& G & T & 3.5 \\ 
0.1 & 0.4 &~& 0.1 & 0.4 &~& C & T & 2.1 \\ 
0.1 & 0.4 &~& 0.2 & 0.3 &~& C & T & 2.3 \\ 
0.2 & 0.3 &~& 0.1 & 0.4 &~& C & T & 3.0 \\ 
0.2 & 0.3 &~& 0.2 & 0.3 &~& C & T & 3.5 \\ 
0.1 & 0.5 &~& 0.1 & 0.3 &~& C & T & 2.0 \\ 
0.1 & 0.5 &~& 0.2 & 0.2 &~& C & G & 2.3 \\ 
0.2 & 0.4 &~& 0.1 & 0.3 &~& C & T & 2.9 \\ 
0.2 & 0.4 &~& 0.2 & 0.2 &~& A & G & 4.0 \\ 
0.1 & 0.6 &~& 0.1 & 0.2 &~& C & T & 2.0 \\ 
0.1 & 0.7 &~& 0.1 & 0.1 &~& A & G & 2.1 \\ 
  \end{tabular}
\end{table}

\clearpage
\section{Discussion and Conclusion}

Likelihood-based methods, in particular the MLE, are often considered the best methods for ancestral state reconstruction.
%\ednote{I wonder whether the statement below should be included? I think empirical Bayes is more frequently used and there is this paper by Shaw et al. referred to below that suggests problems can arise with ML estimation. But if we have references, then we should state ML is best}
%When edge lengths are known, the MLE is the best method for reconstructing the ancestral state of discrete traits. 
%However, little is known about the asymptotic properties of likelihood-based methods for the case when edge lengths are unknown.
In this paper, we studied the consistency of the MLE for the problem of reconstructing the ancestral state of discrete traits on star trees whose edge lengths are unknown.
We proved that the MLE is consistent under symmetric models but can be inconsistent under non-symmetric models.
It is worth noticing that Theorem \ref{assump:lims} implies that the empirical Bayes estimator will have the same difficulties as the MLE. This is a little surprising given the findings of \citet{shaw2019jointML} who showed that 
when the number of sites is large relative to the number of taxa, Maximum likelihood estimation treating ancestral states as parameters can lead to inconsistent topological and edge-length estimation. By contrast the approach that averages over ancestral states, which is more analogous to empirical Bayes, does not suffer from such difficulties. We see here that with large numbers of sequences and small numbers of sites, the Maximum likelihood and empirical Bayes approaches are more comparable. 

The results were for the setting of ancestral reconstruction for the root of a star tree. This is for a simpler framework but results likely apply to multifurcations or ``big bang'' settings. Indeed, although not shown here, the results for the simple difference of frequency estimator can be extended directly to multifurcations. Outside of these settings, ancestral reconstructions will be more variable and not converge. Nevertheless, the results suggest that, more broadly, biases in ancestral reconstructions can become a significant difficulty with large numbers of sequences.  

The effects of adding taxa or sites to an existing alignment has long been of interest \citep{graybeal1998taxon,yang1998taxon,pollock2002taxon,zwickl2002taxon}. Studies tend to conclude that phylogenetic accuracy is improved by additional taxon sampling. Most such research considers relatively small numbers of taxa, however. The results here do not deal directly with phylogenetic estimation which deserves additional study. They are suggestive of potential difficulties with large numbers of taxa, however. To see this, consider comparison of two conflicting topologies differing in $S_1 S_2|S_3 S_4$ versus $S_1 S_3|S_2S_4$ where $S_1,\dots, S_4$ are subtrees each with large numbers of taxa that give rise to separate clades. The likelihood for $S_1 S_2|S_3 S_4,$ for instance, would be calculated as
\[
\sum_{x} P[x_1,\dots x_4] \prod_j P[S_j|x_j]
\]
where $P[x_1,\dots x_4]$ is the probability of ancestral data $x_j$ for tree $S_j$. What the inconsistency results suggest here is that the probabilities of the tip data $P[S_j|x_j]$ can be very large for ancestral character states that are quite different from the actual ancestral data. These would be the only character states giving rise to a relatively large likelihood contribution. In effect, in the presence of inconsistent estimation, the likelihood would be $P[x_1,\dots x_4]$, the four taxon likelihood for the tree $12|34$ but using wrong data $x_1,\dots x_4$ for the tips. 

Our results open an interesting direction for future research.
An immediate avenue is extending these results beyond star trees.
Since likelihood-based methods are no-longer the undisputed best options when edge lengths are unknown, one long-term goal is to develop a better alternative approach.
On the other hand, the inconsistency of likelihood-based methods under non-symmetric models may come from the fact that the number of unknown parameters grows with the number of species. A similar phenomenon has been shown in other statistical settings and is sometimes referred to as the Neyman-Scott problem after \citet{neyman1948inconsistency} who showed that estimation of a measurement error variance $\sigma^2$ can be inconsistent when a small number $m$ of repeated measures, $X_{i1},\dots,X_{im}\sim N(\mu_i,\sigma^2)$ are obtained for a large number $i=1,\dots n$ of objects. 
We hypothesize that if edge lengths are generated from a common distribution, then likelihood-based methods will be consistent. Motivation for this approach comes in part from the fact that treating the $\mu_i$ as random in the Neyman-Scott problem can lead to consistent estimation of $\sigma^2$ \citep{kiefer1956consistency}.
The setting for ancestral reconstruction is complicated by a wide range of dependence structures however and so this question remains open.

\section*{Acknowledgement}
We would like to thank Dr. Vu Dinh for the fruitful discussions about the topic.
LSTH was supported by startup funds from Dalhousie University, the Canada Research Chairs program, the
NSERC Discovery Grant RGPIN-2018-05447, and the NSERC Discovery Launch Supplement DGECR-2018-00181.
ES was supported by a Discovery Grant awarded by the Natural Sciences and Engineering Research Council of Canada. 

\appendix

\section{Technical details}

In this section, we provide detailed proofs of our results.
First, we prove the following useful property of Bernoulli expectations.

\begin{lemma}[A property of Bernoulli expectations]
Suppose that $X_1,\dots,X_M$ are independent and $X_i\sim \mbox{Bernoulli}(p_i)$.
For any $j \in \{ 1, 2, \ldots, M\}$, we define
$$x_{-}=[x_1,\dots,x_{j-1},0,x_{j+1},\dots,x_M]$$ where $x \in \{0,1\}^M$.
Let $g(x)$ be any function such that $g(x) - g(x_{-}) \geq 0$ with strict inequality for at least one $x$ that arises with positive probability. 
Then
\[
E_{p}[g(X)]-E_{p'}[g(X)]>0, \quad \forall p> p'
\]
where $E_p[g(X)]$ is the expected value when $X = (X_1, X_2, \ldots, X_M)$ and $p_j = p$.
\label{lem:bern}
\end{lemma}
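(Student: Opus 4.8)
The plan is to condition on the coordinates $X_{-j}:=(X_i)_{i\neq j}$ and to use that, once these are fixed, $E_p[g(X)]$ is an affine function of $p$ whose slope is nonnegative by hypothesis. For $z\in\{0,1\}^{M-1}$ let $z^{(1)},z^{(0)}\in\{0,1\}^M$ be the vectors obtained by inserting $x_j=1$, respectively $x_j=0$, into the remaining coordinates; then $(z^{(1)})_-=(z^{(0)})_-=z^{(0)}$, so the hypothesis $g(x)-g(x_-)\ge 0$ reads exactly $\Delta(z):=g(z^{(1)})-g(z^{(0)})\ge 0$ for every $z$.

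First I would compute the conditional expectation. Since $X_j\sim\mbox{Bernoulli}(p)$ is independent of $X_{-j}$,
\[
E_p[g(X)\mid X_{-j}=z]=p\,g(z^{(1)})+(1-p)\,g(z^{(0)})=g(z^{(0)})+p\,\Delta(z).
\]
Averaging over $X_{-j}$, whose law $\prod_{i\neq j}\mbox{Bernoulli}(p_i)$ does not depend on $p$, yields
\[
E_p[g(X)]=\alpha+p\,\beta,\qquad \alpha:=E\big[g(X_-)\big],\qquad \beta:=E\big[\Delta(X_{-j})\big],
\]
where $\alpha$ is independent of $p$. Hence $E_p[g(X)]-E_{p'}[g(X)]=(p-p')\,\beta$, and the lemma reduces to the single claim $\beta>0$.

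Finally I would verify $\beta>0$. Since $\Delta\ge 0$ pointwise, certainly $\beta\ge 0$. By the strict-inequality hypothesis there is $x^{*}\in\{0,1\}^M$ occurring with positive probability and with $g(x^{*})-g(x^{*}_-)>0$; as $g(x)-g(x_-)=0$ whenever $x_j=0$, we must have $x^{*}_j=1$, so, writing $z^{*}:=(x^{*}_i)_{i\neq j}$, we get $\Delta(z^{*})>0$. Moreover, ``$x^{*}$ occurs with positive probability'' forces $\Pr(X_{-j}=z^{*})=\prod_{i\neq j}p_i^{x^{*}_i}(1-p_i)^{1-x^{*}_i}>0$ (and $p>0$, which is automatic here since $p>p'\ge 0$). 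Therefore $\beta\ge \Delta(z^{*})\Pr(X_{-j}=z^{*})>0$, and combined with $p>p'$ this gives $E_p[g(X)]-E_{p'}[g(X)]=(p-p')\beta>0$.

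The computation is short, so there is no real obstacle; the only step requiring care is the last one, where the qualitative assumption that some favorable configuration ``arises with positive probability'' must be converted without slippage into the strict positivity of $\beta=E[\Delta(X_{-j})]$ — in particular one uses that a configuration with $x_j=1$ is attainable precisely because $p>p'\ge 0$ forces $p>0$.
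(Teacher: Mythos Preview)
Your proof is correct and is essentially the same argument as the paper's, just phrased via conditioning on $X_{-j}$ rather than splitting the sum over $\{x:x_j=1\}$ and $\{x:x_j=0\}$ directly. Both arrive at $E_p[g(X)]-E_{p'}[g(X)]=(p-p')\sum_{z}\Delta(z)\Pr(X_{-j}=z)$ and conclude strictness from the existence of one $z^*$ with $\Delta(z^*)>0$ and $\Pr(X_{-j}=z^*)>0$.
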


\begin{proof}

\begin{align*}
E[g(X)] 
&= \sum_{x|x_j=1} g(x) P(X_j=1) \prod_{i \mid i\neq j} P(X_i=x_i) 
   + \sum_{x|x_j=0} g(x) P(X_j= 0) \prod_{i \mid i\neq j} P(X_i=x_i) \\
&= \sum_{x|x_j=1} [g(x)-g(x_{-})] P(X_j= 1) \prod_{i \mid i\neq j} P(X_i=x_i) 
   + \sum_{x|x_j=0} g(x) \prod_{i \mid i\neq j} P(X_i=x_i) 
\end{align*}

Therefore,
\[ E_{p}[g(x)]-E_{p'}[g(x)]
    = \sum_{x|x_j\ge 1} [g(x)-g(x_{-})] 
   \{P_{p}(X_j= 1) - P_{p'}(X_j=1) \} 
       \prod_{i \mid i\neq j} P(X_i=x_i) \ge 0 \]
with strict inequality if $g(x)-g(x_{-})>0$ and $\prod_{i \mid i\neq j} P(X_i=x_i)>0$ for at least one $x$.
\end{proof}

\noindent
Now, we are ready to prove our Lemmas and Theorems in the paper.

\begin{proof}[Proof of Lemma \ref{lem:ex}]

Since $(s_k)_{k=1}^n \in (0,1)$, we have $0 < E(s_1) < + \infty$.
By the Strong Law of Large Numbers, $\bar s$ converges to $E(s_1)$ almost surely.
Thus, Assumption \ref{assump:lims} holds.
% Similarly, $P_{\bs \rho^* \mb y}(s_k) \in (0,1)$ for $k = 1, 2, \ldots, n$.
% Again, by the Strong law of large number, $\bar{P}_{\bs \rho^* \mb y}$ converges to $E(P_{\bs \rho^* \mb y}(s_1)) > 0$ almost surely.
% Therefore, Assumption \ref{assump:prob} is satisfied with $p_{\bs \rho^* \mb y} = E(P_{\bs \rho^* \mb y}(s_1))$.

\end{proof}

\begin{proof}[Proof of Lemma \ref{lem:MLE}]

Let $g(\mb y)=\log P_{\bs \rho \mb y}(\hat s(\bs \rho, \mb y))$. 
Then,
\[\mbox{Var}[\ell(\bs \rho)/n] = n^{-2} \sum_{\mb y} \mbox{Var}(n_{\mb y}) g(\mb y)^2
            + n^{-2}\sum_{\mb y} \sum_{\mb y'| \mb y'\neq \mb y} \mbox{Cov}(n_{\mb y},n_{\mb y'})g(\mb y) g(\mb y').\]
Multinomial calculations give that 
$\mbox{Var}(n_{\mb y})=\sum_k P_{\bs \rho^* \mb y}(s_k)[1 - P_{\bs \rho^* \mb y}(s_k)]$ and 
$\mbox{Cov}(n_{\mb y},n_{\mb y'})= - \sum_k P_{\bs \rho^* \mb y}(s_k)P_{\bs \rho^* \mb y'}(s_k)$. 
Thus
\[
\begin{split} \mbox{Var}[\ell(\bs \rho)/n] &= n^{-2} \sum_{\mb y} \sum_k P_{\bs \rho^* \mb y}(s_k) g(\mb y)^2
    - n^{-2} \sum_{\mb y} \sum_k [P_{\bs \rho^* \mb y}(s_k)]^2 g(\mb y)^2 \\
&~~~~~~~~~~~- n^{-2} \sum_{\mb y} \sum_{\mb y'| \mb y'\neq \mb y} \sum_k P_{\bs \rho^* \mb y}(s_k) P_{\bs \rho^* \mb y'}(s_k) g(\mb y) g(\mb y')\\
&= n^{-1} \sum_{\mb y} \bar P_{\bs \rho^* \mb y} g(\mb y)^2
     - n^{-2} \sum_{\mb y} \sum_{\mb y'} \sum_k P_{\bs \rho^* \mb y}(s_k) P_{\bs \rho^* \mb y'}(s_k) g(\mb y) g(\mb y')\\
&= n^{-1} \sum_{\mb y} \bar P_{\bs \rho^* \mb y} [\log P_{\bs \rho \mb y}(\hat s(\bs \rho, \mb y))]^2
     - n^{-2} \sum_k \left [ \sum_{\mb y} P_{\bs \rho^* \mb y}(s_k) g(\mb y) \right ]^2 \\
&\le  n^{-1} \sum_{\mb y} \bar P_{\bs \rho^* \mb y} [\log P_{\bs \rho \mb y}(\hat s(\bs \rho, \mb y))]^2 \to 0.
\end{split}
\] 
Note that $P_{\bs \rho \mb y}(\hat s(\bs \rho, \mb y)) > 0$ because it is the maximized likelihood. Since $\mbox{Var}[\ell(\bs \rho)/n]\rightarrow 0$, then for each $\bs \rho$, $\ell(\bs \rho)/n\rightarrow_p \lim_n E[\ell(\bs \rho)/n]=e(\bs \rho)$. 
Since there are a finite collection of possible $\bs \rho$, then for any $\epsilon>0$, with probability converging to 1, $|\ell(\bs \rho)/n-e(\bs \rho)|<\epsilon$ for all $\bs \rho$. 
Let $\bs \rho_M$ be a maximizer of $e(\bs \rho)$ and let $3\epsilon=\min_{\bs \rho \not\in \mc{H}} \{e(\bs \rho_M) - e(\bs \rho)\}>0$. 
Then $|\ell(\bs \rho)/n-e(\bs \rho)|<\epsilon$ for all $\bs \rho$ implies 
\[
l(\bs \rho_M)/n-l(\bs \rho)/n> e(\bs \rho_M)-\epsilon -(e(\bs \rho)+\epsilon) = e(\bs \rho_M)-e(\bs \rho)-2\epsilon\ge 3\epsilon - 2\epsilon> 0
\]
for all $\bs \rho \not \in \mc{H}$. 
In conclusion, $\hat{\bs \rho} \in \mc{H}$, with probability converging to 1.

\end{proof}

\begin{proof}[Proof of Theorem \ref{thm:EmB}]

Let $\bs \rho_M=\bs \rho_M(\mb y)$ be the element of $\mc{H}$ giving the largest $\ell(\bs \rho)$ for fixed $\mb y$. The empirical Bayes estimator 
$\hat{ \bs \rho}_B$ is also the maximizer of 
\[
\frac{ P_\pi(\bs \rho) \prod_{k=1}^n P_{\bs \rho \mb y_k}(\hat s_k) }
     { P_\pi(\bs \rho_M) \prod_{k=1}^n P_{\bs \rho_M \mb y_k}(\hat s_k) }
=
\frac{ P_\pi(\bs \rho) \prod_{k=1}^n P_{\bs \rho \mb y_k} (\hat s_k)}
     { P_\pi(\bs \rho_M) \exp[\ell(\bs \rho_M)] }
\bigg/
\frac{ P_\pi(\bs \rho_M) \prod_{k=1}^n P_{\bs \rho_M \mb y_k}(\hat s_k) }
     { P_\pi(\bs \rho_M) \exp[\ell(\bs \rho_M)] }
=: T_1/T_2.
\]
Since this function equals 1 when $\bs \rho = \bs \rho_M$, it suffices to show that it converges to 0 for $\bs \rho \notin \mc{H}$. 
Since $\hat s(\bs \rho, \mb y_k)$ is the maximizer of $P_{\bs \rho \mb y_k}(s)$ and since $\lim_n \{\ell(\bs \rho_M)/n - \ell(\bs \rho) / n\}>0$, 
\begin{align*}
T_1 &\le 
\frac{ P_\pi(\bs \rho)  \prod_{k=1}^n P_{\bs \rho \mb y_k}(\hat s(\bs \rho, \mb y_k)) }
     { P_\pi(\bs \rho_M) \exp[\ell(\bs \rho_M)] }
=
\frac{ P_\pi(\bs \rho) \exp[\ell(\bs \rho)] }
     { P_\pi(\bs \rho_M) \exp[\ell(\bs \rho_M)] } \\
&  =
\frac{ P_\pi(\bs \rho) }{ P_\pi(\bs \rho_M) } \exp \left \{-n \left [\frac{\ell(\bs \rho_M)}{n} - \frac{\ell(\bs \rho)}{n} \right ] \right \}
\rightarrow 0.
\end{align*}
On the other hand,
\[
T_2 = 
\frac{ P_\pi(\bs\rho_M) \prod_{k=1}^n P_{\bs \rho_M \mb y_k}(\hat s_k) }
     { P_\pi(\bs \rho_M) \exp[\ell(\bs \rho_M)] }
= 
\sum_{\bs \rho} \frac{ P_\pi(\bs \rho) \prod_{k=1}^n P_{\bs \rho \mb y_k}(\hat s_k) }
               { P_\pi(\bs \rho_M) \exp[\ell(\bs \rho_M)] }
- 
\sum_{\bs \rho \neq \bs \rho_M} \frac{ P_\pi(\bs \rho) \prod_{k=1}^n P_{\bs \rho \mb y_k}(\hat s_k) }
                           { P_\pi(\bs \rho_M) \exp[\ell(\bs \rho_M)] }.
\]
Since $\hat{\mb s}$ maximizes the first sum and since $\hat s(\bs \rho, \mb y_k)$ is the maximizer of $P_{\bs \rho \mb y_k}(s)$,
\[\begin{split}
T_2 &\ge
\sum_{\bs \rho} 
  \frac{ P_\pi(\bs \rho) 
             \prod_{k=1}^n P_{\bs \rho \mb y_k}(\hat s(\bs \rho_M, \mb y_k)) }
       { P_\pi(\bs \rho_M) \exp[\ell(\bs \rho_M)]  }
-\sum_{\bs \rho \neq \bs \rho_M} 
   \frac{ P_\pi(\bs \rho) 
             \prod_{k=1}^n P_{\bs \rho \mb y_k}(\hat s(\bs \rho, \mb y_k)) }
        { P_\pi(\bs \rho_M) \exp[\ell(\bs \rho_M)] } \\
&=
1 + \sum_{\bs \rho \neq \rho_M} 
     \frac{ P_\pi(\bs \rho) }{ P_\pi(\bs \rho_M) } 
     \bigg\{1 - \frac{\exp[\ell(\bs \rho)]}{\exp[\ell(\bs \rho_M)]}\bigg\}
\end{split}\]
Because $\bs \rho_M$ maximizes $\ell(\bs \rho)$ over $\bs \rho\in \mc{H}$, then 
$1 - \exp[\ell(\bs \rho)]/\exp[\ell(\bs \rho_M)]\ge 0$ for $\bs \rho\in \mc{H}$. Thus
\[
T_2 \ge 
1 + \sum_{\bs \rho \notin \mc{H}}
\frac{ P_\pi(\bs \rho) }{ P_\pi(\bs \rho_M) }
  \bigg[1 - 
\exp \left \{-n \left [\frac{\ell(\bs \rho_M)}{n} - \frac{\ell(\bs \rho)}{n} \right ] \right \}\bigg]
\]
Since $\lim_n \{\ell(\bs \rho_M)/n - \ell(\bs \rho) / n\}>0,$ we obtain that $\liminf T_2\ge 1$.
Hence, $T_1/T_2 \to 0$.

\end{proof}

\begin{proof}[Proof of Lemma \ref{lem:MLEedge}]

Taking derivatives of both sides of \eqref{eq:prop1}, we obtain
\begin{align}
\label{eq:prop5}
  \frac{\partial}{\partial s} \log P_{\bs \rho \mb y}(s)&= \sum_{i}n_{\bs \rho \mb y}(i,i)  \frac{1-\pi_{i}}{\pi_{i}+ (1-\pi_{i})s} - \frac{N - \sum_{i} n_{\bs \rho \mb y}(i,i)}{1-s} \\
\nonumber
&=  \sum_{i}n_{\bs \rho \mb y}(i,i)  \frac{(1-\pi_{i})(1-s) + \pi_{i} + (1-\pi_{i})s}{[\pi_{i}+ (1-\pi_{i})s](1-s)} - \frac{N}{1-s}\\
\label{eq:prop2}
&= (1-s)^{-1} \left \{\sum_{i}  \frac{n_{\bs \rho \mb y}(i,i)}{\pi_{i}+ (1-\pi_{i})s} - N \right \}.
\end{align}
Taking derivatives of \eqref{eq:prop5},
\[
\frac{\partial^2}{\partial s^2} \log P_{\bs \rho \mb y}(s)= -\sum_{i}n_{\bs \rho \mb y}(i,i)  \frac{(1-\pi_{i})^2}{(\pi_{i}+ (1-\pi_{i})s)^2} - \frac{N - \sum_{i} n_{\bs \rho \mb y}(i,i)}{(1-s)^2} < 0.
\]
Thus, $\log P_{\bs \rho \mb y}(s)$ is concave as a function of $s$. 

When $\sum_i n_{\bs \rho \mb y}(i,i) = N$, up to an additive constant, $$\log P_{\bs \rho \mb y}(s)=\sum_{i} n_{\bs \rho \mb y}(i,i) \log[\pi_{i}+ (1-\pi_{i})s],$$ an increasing function of $s$. So $\log P_{\bs \rho \mb y}(s)$ achieves its maximum at $s = 1$. 
Hence $\hat s(\bs \rho, \mb y) = 1$.

If $\sum_i n_{\bs \rho \mb y}(i,i) < N$, then $N - \sum_{i} n_{\bs \rho \mb y}(i,i)>0$ in the second term of \eqref{eq:prop1}, so $s=1$ gives $\log P_{\bs \rho \mb y}(s)=-\infty.$ 
Thus $\hat s(\bs \rho, \mb y) < 1$ in this case. 
Since $\log P_{\bs \rho \mb y}(s)$ is concave, $\hat s(\bs \rho, \mb y) = 0$ if and only if 
\[
\frac{\partial}{\partial s} \log P_{\bs \rho \mb y}(0) \leq 0.
\]
Substituting $s = 0$ in \eqref{eq:prop2} gives $\hat s(\bs \rho, \mb y) = 0$ if and only if 
\[
\sum_{i}\frac{n_{\bs \rho \mb y}(i,i)}{\pi_{i}} \le N.
\] 

If $\sum_{i}n_{\bs \rho \mb y}(i,i)/\pi_{i} > N$ and $\sum_i n_{\bs \rho \mb y}(i,i) < N$, then $\hat s(\bs \rho, \mb y)$ is obtained by setting \eqref{eq:prop2} to $0$ which gives
\[
  \sum_{i}  \frac{n_{\bs \rho \mb y}(i,i)}{\pi_{i}+ (1-\pi_{i}) \hat s(\bs \rho, \mb y)} = N.
\]

\end{proof}

\begin{proof}[Proof of Theorem \ref{thm:symmetric}]

Note that $\sum_{i} n_{\bs \rho \mb y}(i,i)=\sum_{l=1}^N x_l$, here $x_l=1$ if $y_l=\rho_l$ and 0 otherwise. For the symmetric model, (\ref{eq:prop1}) gives that
\[
\log P_{\bs \rho \mb y}(\hat s)
   = \sum_{l} x_l \cdot \log[1/c + (1-1/c) \hat s] + (N-\sum_l x_l ) \log(1-\hat s) - \sum_l x_l \log(1/c) + C(\mb y)
\]
It follows that $\hat s = \hat s(x)$ is a function of $(\bs \rho, \mb y)$ through $x$ alone. As a function of $x$, $\log P_{\bs \rho \mb y}(\hat s)-C(\mb y)$ is 
\begin{align*}
g(x) &= F(\hat s(x),x)\\
 &=  \sum_{l} x_l \log[1/c + (1-1/c) \hat s(x)] + (N-\sum_l x_l) \log[1-\hat s(x)] - \sum_l x_l \log(1/c) .
\end{align*}
Thus
\begin{align*}
\frac{\partial}{\partial x_j} g(x) 
&= \frac{\partial}{\partial s} F(\hat s(x),x) \cdot 
\frac{\partial\hat s}{\partial x_j} 
+ \frac{\partial}{\partial x_j} F(\hat s(x),x)\\
&=\frac{\partial\hat s}{\partial x_j} \cdot \frac{\partial}{\partial s} \log P_{\rho y}(\hat s) + \log[1/c + (1-1/c) \hat s]- \log(1 - \hat s) - \log[1/c].
\end{align*}
If $0<\hat s<1$, then $\frac{\partial}{\partial s} \log P_{\rho y}(\hat s)=0$. 
so 
\[\frac{\partial}{\partial x_j} g(x) = \log[1 + (c - 1) \hat s ] - \log(1-\hat s) > 0.\]
If $\hat s=0$, 
\[\frac{\partial}{\partial x_j} g(x)= \frac{\partial\hat s}{\partial x_j} \cdot \frac{\partial}{\partial s} \log P_{\rho y}(0). \]
From Lemma \ref{lem:MLEedge},  $\hat s =0$ whenever $\sum_i n_{\bs \rho \mb y}(i,i)/(1/c)=c \sum_l x_l\le N$.
Thus $\frac{\partial\hat s}{\partial x_j}=0$ for $\hat s =0$ and $c\sum_l x_l< N$.
If $\hat s =0$ and $c\sum_l x_l=\sum_i n_{\bs \rho \mb y}(i,i)/(1/c) = N$, then, by Equation \eqref{eq:prop2}, we deduce that $\frac{\partial}{\partial s} \log P_{\rho y}(0)=0$. 
Therefore, $\frac{\partial}{\partial x_j} g(x) =0$ when $\hat s=0$.
Thus, for $0\le \hat s < 1$, $\frac{\partial}{\partial x_j} g(x)\ge 0$ with strict inequality for any $x$ such that $0< \hat s<1$. 
If $\hat s=1$ then $\sum_i n_{\bs \rho \mb y}(i,i) = N$ or each of the $y_l=\rho_l$. In this case $x_j=1$ and cannot be increased. In conclusion, $g(x)-g(x_{-})\ge 0,$ where $x_{-}=[x_1,\dots,x_{j-1},0,x_{j+1},\dots,x_c],$ with strict inequality for all $x$ with $\hat s>0$.

By Equation \eqref{eq:prop1}, an alternative expression for $e(\bs \rho)$ is given by
\[e(\bs \rho) = E[l(\rho)/n] 
= \frac{1}{n} \sum_{k=1}^n E[\log P_{\bs \rho \mb Y_k}(\hat s(\bs \rho,\mb Y_k))] 
= \frac{1}{n} \sum_{k=1}^n E[g(X^{(k)})] + \frac{1}{n} \sum_{k=1}^n E[C(\mb Y _k)].
\]
where $E[g(X^{(k)})]$ is the expected value with respect to the sequence observed at the $k$th leaf.
Here, $X^{(k)} = (X_1, X_2, \ldots, X_N)$ where, independently, $X_i\sim \mbox{Bernoulli}(P_{\rho_i^* \rho_i}(s_k))$.

% On the other hand, by Equation \eqref{eq:prop1}, we have
% \begin{equation}
% \ell(\bs \rho) = \sum_x n_x g(x) + \sum_{\mb y}\sum_{i}\sum_{j} n_{\mb y} m_{\mb y}(i,j) \log(\pi_j)
% \label{eq:g}
% \end{equation}
% where $n_x$ is the number of times $x$ occurred at a leaf.
% Note that the second term of the right hand side of Equation \eqref{eq:g} does not depend on $\bs \rho$.
% Hence, $\hat{\bs \rho} = \argmax_{\bs \rho} \sum_x n_x g(x)$ and 
% \[
% \bs \rho_M = \argmax_{\bs \rho} \left ( \lim_n E \left [\frac{1}{n}\sum_x n_x g(x) \right ] \right ).
% \]
% Note that
% \[
% E \left [\frac{1}{n}\sum_x n_x g(x) \right ] = \frac{1}{n}\sum_{k=1}^n E[g(X^{(k)})]
% \]
% where $E[g(X^{(k)})]$ is the expected with respect to the sequence observed at the $k$-th leaf.
% Here, $X^{(k)} = (X_1, X_2, \ldots, X_c)$ where $X_1,\dots,X_c$ are independent and $X_i\sim \mbox{binomial}(N_i, P_{i\rho_i}(s_k))$.
% Recall that $N_i$ is the number of character $i$ in the ancestral sequence $\bs \rho^*$.

Consider a $\bs \rho$ with $\rho_j\neq \rho_j^*.$ We denote $h_{\bs \rho}(s_k)=E[g(X^{(k)})]$ and define $\tilde{\bs \rho}$ as $\tilde \rho_i = \rho_i$ for $i \ne j$ and $\tilde \rho_j = \rho_j^*$. Since, for the symmetric model, $P_{\rho_j^* \rho_j^*}(s_k)> P_{\rho_j^* \rho_j}(s_k),$ and $g(x) - g(x_{-})\ge 0$ with strict inequality for all $x$ with $\hat s>0$, Lemma \ref{lem:bern} (the property of Bernoulli expectations) gives that $h_{\tilde{\bs \rho}}(s_k) > h_{\bs \rho}(s_k)$.

For any $U \in (0,1]$, let $s_U = \argmin_{s \in [U, 1]} h_{\tilde{\bs \rho}}(s) - h_{\bs \rho}(s)$.
Denoting the proportion of $s_k \ge U$ as $p_U$,
\[
\bar s \le \sum_{k \mid s_k \ge U} \frac{1}{n}  + \frac{1}{n} \sum_{k \mid s_k < U} U \le p_U + U.
\]
Since $0 < \lim \bar s$, choosing $U = \lim \bar s / 2$, gives that $\lim\inf p_U \ge \lim \bar s/2$ for large $n$.
Thus, for sufficiently large $n$,
\[
\begin{split}
e(\tilde{ \bs \rho})-e(\bs \rho) &=
 \lim \frac{1}{n} \sum_{k} h_{\tilde{\bs \rho}}(s_k) - h_{\bs \rho}(s_k)\\
 &\ge \lim  \frac{1}{n}  \sum_{k \mid s_k \ge U} h_{\tilde{\bs \rho}}(s_k) - h_{\bs \rho}(s_k) \\
& \ge [ h_{\tilde{\bs \rho}}(s_U) - h_{\bs \rho}(s_U) ] \lim\inf p_U \ge  \frac{[ h_{\tilde{\bs \rho}}(s_U) - h_{\bs \rho}(s_U) ] \lim \bar s}{2} > 0. 
\end{split}
\]

In other words, $e(\bs \rho)$ can always be increased by changing a $\rho_j$ that is not equal to the true ancestral state, if such $\rho_j$ exists. Thus $\bs \rho_M = \bs \rho^*$.

\end{proof}

\begin{proof}[Proof of Theorem \ref{thm:zone}]

%Since $\bs \rho^* = (r, \dots, r)$, a natural guess is that in the limit, the MLE is of the form $\bs \rho=(a,\dots,a)$.
It is sufficient for us to focus on $\bs \rho=(a,\dots,a)$.
Let $\hat p_a=n_{\bs \rho \mb y}(a, a)/N$ and $\hat s = \hat s(\bs \rho, \mb y)$. 
When $1 > \hat p_a > \pi_a$, by Lemma \ref{lem:MLEedge}, we have
\begin{equation}
\label{eq:prop3a}
\frac{n_{\bs \rho \mb y}(a, a)}{\pi_{a}+ (1-\pi_{a}) \hat s} = N
\Longleftrightarrow 
\hat p_a = \pi_a+ (1-\pi_a) \hat s
\end{equation}
which has solution $\hat s=(\hat p_a- \pi_a)/(1-\pi_a)$. 
Substituting in Equation \eqref{eq:prop1} gives that when $1 > \hat p_a > \pi_a$,
\begin{eqnarray}
\nonumber
\frac{\log P_{\bs \rho \mb y}(\hat s)}{N} &= & \hat p_a
\log \left \{\pi_a + 
       (1-\pi_a) \cdot \frac{\hat p_a- \pi_a}{1-\pi_a} \right \}  
% \\
% \nonumber
%  && ~~~~~~+
 +[1-\hat p_a] \log\left\{1-\frac{\hat p_a- \pi_a}{1-\pi_a}\right\} \\
\nonumber
 && ~~~~~~~~-  \hat p_a\log(\pi_a) 
+ C(\mb y)/N\\
\label{eq:Elnl1}
&=& \hat p_a \log\left\{\frac{\hat p_a}{\pi_a} \right\}+
   [1-\hat p_a] \log\left\{\frac{1-\hat p_a}{1-\pi_a}\right\}
   + C(\mb y)/N
\end{eqnarray}
In the case that $\hat p_a=1$, $\hat s=1$, which when substituted in Equation \eqref{eq:prop1} gives Equation \eqref{eq:Elnl1} where we adopt the convention that $0\cdot \log(0)=0$. In the case that $\hat p_a \leq \pi_a$, $\log P_{\bs \rho \mb y}(\hat s)= C(\mb y)$.
Thus, for the $k$th sequence,
\[
E \left [ \frac{\log P_{\bs \rho \mb y_k}(\hat s_k)}{N} \right ] =
E\left[\left\{~\hat p_a \log\left[\frac{\hat p_a}{\pi_{a}} \right]+
   (1-\hat p_a) \log\left[\frac{1-\hat p_a}{1-\pi_{a}}\right]~\right\}
   ~I\big\{\hat p_a > \pi_a\big\}\right] +E[C(\mb y)/N],
\]
where $I\{\cdot\}$ is the indicator function. 
For the $k$th sequence, $N\hat p_a \sim\mbox{binomial}(N,P_{ra}(s_k)).$ Let 
$$w(p,\pi)=E\left[\left\{~\hat p \log\left[\frac{\hat p}{\pi} \right]+
   (1-\hat p) \log\left[\frac{1-\hat p}{1-\pi}\right]~\right\}
   ~I\big\{\hat p > \pi\big\}\right],$$
where $N\hat p \sim \mbox{binomial}(N,p)$. 
Since $e(\bs \rho)$ is the limiting average value of 
$E[ \log P_{\bs \rho \mb y_k}(\hat s_k)]$,
\[
\frac{e(\bs \rho) - e(\bs \rho^*)}{N}
= \lim_n \frac{1}{n} \sum_{k=1}^n \{w(P_{ra}(s_k),\pi_a)-w(P_{rr}(s_k),\pi_r)\}.
\] 
If $s_k=0$ all $k$, then $w(P_{ra}(s_k),\pi_a)=v(\pi_a)$ and $w(P_{rr}(s_k),\pi_r)=v(\pi_r)$, so 
\[
\frac{e(\bs \rho) - e(\bs \rho^*)}{N} = v(\pi_a) - v(\pi_r)
\] 
If this value is positive, then by continuity, $e(\bs \rho) - e(\bs \rho^*)$ will be positive for choices of $\mb s$ for which $\lim \bar s$ is sufficiently small.
\end{proof}

\begin{proof}[Proof of Theorem \ref{thm:distinct}]

By Equation \eqref{eq:dchar-1}
\[
E[V(j,l)] = \bar s [I\{\rho_l^* = j\} 
                         - \sum_{v\neq l} I\{\rho_v^* = j\}/(k-1)].
\]
Since there are at least two distinct ancestral character states, 
\[ E[V(\rho_l^*,l)]\ge \bar s \left [1- \frac{N-2}{N-1} \right ] = \frac{\bar s}{N-1} \to \frac{\lim \bar s}{N-1} > 0.\]
On the other hand, for $j\neq {\rho_l^*},$ $E[V(j,l)]\le 0$.
We have
\begin{equation}
\label{eq:dchar-1a}
P[\bs \rho^{(D)}\neq \bs \rho^*; \mb s]
 = P \left [\bigcup_{l=1}^N \{\rho_l^{(D)}\neq \rho_{l}^*\}; \mb s \right ]
  \le \sum_{l=1}^N P[\rho_l^{(D)}\neq \rho_{l}^*; \mb s].
\end{equation}
For any $\epsilon_n( \mb s)$, 
\[\begin{split}
P[\rho_l^{(D)}= \rho_{l}^*; \mb s]
  &= P\left [ \bigcap_{j\neq \rho_{l}^*} \{V(\rho_{l}^*,l) > V(j,l)\}; \mb s \right ]\\
  &\ge P \left [ \bigcap_{j\neq \rho_{l}^*} \{V(j,l) \le \epsilon_n( \mb s )\}
             \bigcap \{V(\rho_{l}^*,l) > \epsilon_n( \mb s )\}; \mb s \right ]\\
  &= 1 - P \left [\bigcup_{j\neq \rho_{l}^*} \{V(j,l) > \epsilon_n(\mb s )\} 
                \bigcup \{V(\rho_{l}^*,l) \le \epsilon_n( \mb s)\} ; \mb s \right]\\
  &\ge 1 - P[V(\rho_{l}^*,l) \le \epsilon_n(\mb s); \mb s] - 
              \sum_{j\neq \rho_{l}^*} P[V(j,l)>\epsilon_n(\mb s); \mb s].
\end{split}\]
So
\begin{equation}
\label{eq:dchar-1b}
P[\rho_l^{(D)} \ne \rho_{l}^*;\mb s]
 = 1- P[\rho_l^{(D)}= \rho_{l}^*;\mb s]
 \le P[V(\rho_{l}^*,l) \le \epsilon_n(\mb s);\mb s] + 
              \sum_{j\neq \rho_{l}^*} P[V(j,l)>\epsilon_n(\mb s);\mb s].
\end{equation}
Now
\[V(j,l) = \frac{1}{n} \sum_{k=1}^n \left [ I\{y_{kl}=j\} - \frac{1}{N-1}\sum_{v\neq l}I\{y_{kv} = j\} \right ]  
        = \frac{1}{n} \sum_{k=1}^n V_k\]
where the $V_k \in [-1,1]$ are independent. 
Thus, using Hoeffding's Inequality,
for $E[V(\rho_{l}^*,l)]-\epsilon_n>0$, we have
\begin{eqnarray}
 P[V(\rho_{l}^*,l)\le \epsilon_n(\mb s);\mb s]
\nonumber
     &=& P\big \{ -V(\rho_{l}^*,l)-E[-V(\rho_{l}^*,l)]\ge E[V(\rho_{l}^*,l)] - \epsilon_n(\mb s); \mb s \big \}\\
\label{eq:dchar-2}
    &\le& \exp[- n \{E[V(\rho_{l}^*,l)] - \epsilon_n(\mb s)\}^2].
\end{eqnarray}
Consider $\epsilon_n(\mb s)= \bar s/[2(N-1)]>0.$ 
Since $E[V(\rho_{l}^*,l)]\ge \bar s/(N-1),$ then $$E[V(\rho_{l}^*,l)]-\epsilon_n(t)\ge \frac{\bar s_n(t)}{2(N-1)}>0.$$
Substituting in (\ref{eq:dchar-2})
\begin{equation}
\label{eq:dchar-3}
P[V(\rho_{l}^*,l)\le \epsilon_n(\mb s);\mb s] \le 
           \exp\{- n \bar s^2/[4(N-1)^2]\}
\end{equation} 
For $j\neq \rho_{l}^*$, $E[V(j,l)] \leq 0$, so 
\begin{eqnarray} 
P[V(j,l) > \epsilon_n(\mb s);\mb s]
\nonumber
   &\le& P[V(j,l)-E[V(j,l)] > \epsilon_n(\mb s);\mb s]\\
\label{eq:dchar-4}
   &\le& \exp[- n \epsilon_n(\mb s)^2]
    = \exp\{- n \bar s^2/[4(N-1)^2]\}
\end{eqnarray}
Combining (\ref{eq:dchar-1a}), (\ref{eq:dchar-1b}), (\ref{eq:dchar-3}) and (\ref{eq:dchar-4}) gives
\begin{equation}
\label{eq:dchar-5}
P[\bs \rho^{(D)}\neq \bs \rho^*;\mb s] \le c N \exp\{-n \bar s^2/[4(N-1)^2]\}
\le c N \exp\{-n \bar s^2/[4N^2]\}.
\end{equation}
Substituting $N = \sqrt{\bar s n/(2\log n )}$ in (\ref{eq:dchar-5}) gives
\[ P[\bs \rho^{(D)}\neq \bs \rho^*; \mb s] 
 \le c \sqrt{\bar s n/(2\log n )} \cdot \exp[-\log(n)/2]
   = c \sqrt{\frac{\bar s}{2\log n}}. \]
So consistency holds with $N = O(\sqrt{n/\log n}).$

\end{proof}

\begin{proof}[Proof of Lemma \ref{lem:twosites}]

Consider the MLE of edge-lengths when likelihoods are calculated with fixed ancestral states, $(\rho_1,\rho_2)$, at the two sites. For a sequence with character states $(x,y)$ at the two sites, when $(x,y)=(\rho_1,\rho_2)$, the MLE is $\hat s = 1,$ giving log likelihood contribution $\log(1)=0$. When $x\neq \rho_1$ and $y\neq \rho_2$, $\hat s = 0$ and the log likelihood contribution is $\log(\pi_x\pi_y).$
When $x=\rho_1$ but $y\neq\rho_2$, the log likelihood contribution for the sequence is
\begin{equation}
\label{eq:ts-a}
\log[\{\pi_{\rho_1}+(1-\pi_{\rho_1})\hat s\}\{1- \hat s\}\pi_{y}]
  = \log(\pi_{\rho_1}+(1-\pi_{\rho_1})\hat s) + \log(1-\hat s) + \log(\pi_y)
\end{equation}
where $\hat s$ is the maximizer of $F(s)=\log[(pi_{\rho_1} + (1-\pi_{\rho_1})s)+\log[1-s].$ 
Here 
\[
F'(s)=\frac{1-\pi_{\rho_1}}{\pi_{\rho_1} + (1-\pi_{\rho_1})s} - \frac{1}{1-s}.
\] 
Note that $ F'(s) = 0$ has a unique solution in $(-\infty, 1]$:
\[
s_0 = \frac{1-2\pi_{\rho_1}}{2(1-\pi_{\rho_1})}
\]
and $F(s) > 0 \Leftrightarrow s < s_0$.
If $\pi_{\rho_1} \ge 1/2$, then $F'(s) < 0$ for all $s \in [0,1]$.
So $\hat s=0$ and the log likelihood contribution (\ref{eq:ts-a}) is $\log(\pi_{\rho_1}) + \log(\pi_y).$ 
If $\pi_{\rho_1}< 1/2$, $F(s)$ achieves maximum at $s_0$.
Thus, $\hat s  = s_0$.
Substituting in (\ref{eq:ts-a}) gives log likelihood contribution
\begin{equation}
\label{eq:ts-c}
 \log[1/2]-\log[2(1-\pi_{\rho_1})]+\log[\pi_y]=-\log[4(1-\pi_{\rho_1})]+\log[\pi_y]
\end{equation}
By symmetry, if $x\neq\rho_1$ but $y=\rho_2$, the log likelihood contributions are $\log(\pi_{\rho_2}) + \log[\pi_x]$ when $\pi_{\rho_2}\ge 1/2$ and $-\log[4(1-\pi_{\rho_2})]+\log[\pi_x]$ when $\pi_{\rho_2}<1/2.$

Let $n_{y_1 y_2}$ be the number times we observe the sequence $(y_1, y_2)$, we define
\[
n^{(1)}_{y_1} = \sum_{y_2}{n_{y_1 y_2}}, \quad ,\text{and} \quad n^{(2)}_{y_2} = \sum_{y_1}{n_{y_1 y_2}}.
\]
Substituting the log likelihood contributions gives
\begin{eqnarray}
\nonumber
\ell(\rho_1,\rho_2) &=& \sum_{x\neq\rho_1}\sum_{y\neq\rho_2} n_{xy} \log[\pi_x \pi_y]  
     + \sum_{y\neq\rho_2} n_{\rho_1y} \log[\pi_y] + \sum_{x\neq\rho_1} n_{x \rho_2} \log[\pi_x]  \\
\nonumber
    && +\sum_{y\neq\rho_2} n_{\rho_1y} \{I\{\pi_{\rho_1}\ge 1/2\} \log[\pi_{\rho_1}] - I\{\pi_{\rho_1}< 1/2\} \log[4(1-\pi_{\rho_1})]\} \\
  \label{eq:ts-1}
    &&  +\sum_{x\neq\rho_1} n_{\rho_1x} \{I\{\pi_{\rho_2}\ge 1/2\} \log[\pi_{\rho_2}] - I\{\pi_{\rho_2}< 1/2\} \log[4(1-\pi_{\rho_2})]\}.
\end{eqnarray}
Combining terms involving $\pi_x$ gives the sum
\begin{eqnarray}
\nonumber
 \sum_{x\neq\rho_1}\sum_{y\neq\rho_2} n_{xy} \log[\pi_x] + \sum_{x\neq\rho_1} n_{x\rho_2} \log[\pi_x] 
   &=& \sum_{x\neq\rho_1} \sum_{y} n_{xy}  \log[\pi_x] \\
  \nonumber
   &=& \sum_{x\neq\rho_1} n^{(1)}_{x}\log[\pi_x] \\
     \label{eq:ts-2}
   &=& \sum_{x} n^{(1)}_{x}\log[\pi_x] - n^{(1)}_{\rho_1}\log[\pi_{\rho_1}].
\end{eqnarray}
Similarly, 
\begin{equation}
  \label{eq:ts-3}
  \sum_{x\neq\rho_1}\sum_{y\neq\rho_2} n_{xy} \log[\pi_y] + \sum_{y\neq\rho_2} n_{\rho_1y} \log[\pi_y] 
   = \sum_{y} n^{(2)}_{y}\log[\pi_y] - n^{(2)}_{\rho_2}\log[\pi_{\rho_2}] .
\end{equation}
Substituting (\ref{eq:ts-2}) and (\ref{eq:ts-3}) in (\ref{eq:ts-1}) gives
\begin{eqnarray}
\nonumber
\ell(\rho_1,\rho_2) &=& \sum_{x} n^{(1)}_{x}\log[\pi_x]+\sum_{y} n^{(2)}_{y}\log[\pi_y] - n^{(1)}_{\rho_1}\log[\pi_{\rho_1}] - n^{(2)}_{\rho_2}\log[\pi_{\rho_2}] \\
\nonumber
  && +[n^{(1)}_{\rho_1}-n_{\rho_1\rho_2}] \{I\{\pi_{\rho_1}\ge 1/2\} \log[\pi_{\rho_1}] - I\{\pi_{\rho_1}< 1/2\} \log[4(1-\pi_{\rho_1})]\} \\
% \nonumber
  \label{eq:ts-4}
  && +[n^{(2)}_{\rho_2}-n_{\rho_1\rho_2}] \{I\{\pi_{\rho_2}\ge 1/2\} \log[\pi_{\rho_2}] - I\{\pi_{\rho_2}< 1/2\} \log[4(1-\pi_{\rho_2})]\}.
\end{eqnarray}
Note that
\[
I\{\pi\ge 1/2\} \log[\pi] - I\{\pi< 1/2\} \log[4(1-\pi)]= \log[\pi] - I\{\pi< 1/2\}\log[4\pi(1-\pi)].
\] 
Substituting in (\ref{eq:ts-4}) with 
$f(\pi)=I\{\pi< 1/2\}\log[4\pi(1-\pi)],$ 
\begin{eqnarray}
\nonumber
\ell(\rho_1,\rho_2) &=& \sum_{x} n^{(1)}_{x}\log[\pi_x]+\sum_{y} n^{(2)}_{y}\log[\pi_y] - n^{(1)}_{\rho_1}\log[\pi_{\rho_1}] - n^{(2)}_{\rho_2}\log[\pi_{\rho_2}] \\
\nonumber
&& ~~~~+[n^{(1)}_{\rho_1}-n_{\rho_1\rho_2}] \{\log[\pi_{\rho_1}] - f(\pi_{\rho_1})\}
 + [n^{(2)}_{\rho_2}-n_{\rho_1\rho_2}] \{\log[\pi_{\rho_2}]  - f(\pi_{\rho_2})\}\\
  \nonumber
&=&  \sum_{x} n^{(1)}_{x}\log[\pi_x]+\sum_{y} n^{(2)}_{y}\log[\pi_y] - n^{(1)}_{\rho_1} f(\pi_{\rho_1}) - n^{(2)}_{\rho_2} f(\pi_{\rho_2}) \\
  \label{eq:ts-5}
&& ~~~~   + n_{\rho_1\rho_2}\{f(\pi_{\rho_1})+ f(\pi_{\rho_2})
                         -\log[\pi_{\rho_1}\pi_{\rho_2}]\}.
\end{eqnarray}
The expectations of $n^{(1)}_{\rho_1}/n,$ $n^{(2)}_{\rho_2}/n$ and $n_{\rho_1\rho_2}/n$ are $\bar P_{\rho_1^*\rho_1}$, $\bar P_{\rho_2^*\rho_2 }$ and  
$\bar P_{(\rho_1^*, \rho_2^*) (\rho_1, \rho_2)}$ respectively. 
So the expected maximized log likelihood is
\[
E[l(\rho_1,\rho_2)/n] = C_0 - \bar P_{\rho_1^*\rho_1} f(\pi_{\rho_1}) - \bar P_{\rho_2^*\rho_2 } f(\pi_{\rho_2}) + \bar P_{(\rho_1^*, \rho_2^*) (\rho_1, \rho_2)} \{f(\pi_{\rho_1})+ f(\pi_{\rho_2})
                         -\log[\pi_{\rho_1}\pi_{\rho_2}]\}
\]
where 
\[
C_0 = \sum_{x} \bar P_{\rho_1^* x } \log[\pi_x]+\sum_{y} \bar P_{\rho_2^* y }\log[\pi_y]
\]  is independent of $(\rho_1,\rho_2)$.

\end{proof}

\bibliographystyle{chicago}
\bibliography{ms}

\begin{thebibliography}{}

\bibitem[\protect\citeauthoryear{An{\'e}}{An{\'e}}{2008}]{ane2008analysis}
An{\'e}, C. (2008).
\newblock Analysis of comparative data with hierarchical autocorrelation.
\newblock {\em Annals of Applied Statistics\/}~{\em 2\/}(3), 1078--1102.

\bibitem[\protect\citeauthoryear{Bickel and Doksum}{Bickel and
  Doksum}{2007}]{bickel2007book}
Bickel, P.~J. and K.~Doksum (2007).
\newblock {\em Mathematical Statistics: Basic Ideas and Selected Topoics.
  Volume I}.
\newblock New Jersey: Pearson.

\bibitem[\protect\citeauthoryear{Collins, Wimberger, and Naylor}{Collins
  et~al.}{1994}]{collins1994compositional}
Collins, T.~M., P.~H. Wimberger, and G.~J.~P. Naylor (1994).
\newblock Compositional bias, character state bias, and character-state
  reconstruction using parsimony.
\newblock {\em Systematic Biology\/}~{\em 43\/}(4), 482--496.

\bibitem[\protect\citeauthoryear{Dinh, Ho, Suchard, and Matsen~IV}{Dinh
  et~al.}{2018}]{dinh2018consistency}
Dinh, V., L.~S.~T. Ho, M.~A. Suchard, and F.~A. Matsen~IV (2018).
\newblock Consistency and convergence rate of phylogenetic inference via
  regularization.
\newblock {\em Annals of Statistics\/}~{\em 46\/}(4), 1481.

\bibitem[\protect\citeauthoryear{Erd{\"o}s, Steel, Sz{\'e}kely, and
  Warnow}{Erd{\"o}s et~al.}{1999}]{erdos1999fewII}
Erd{\"o}s, P.~L., M.~A. Steel, L.~Sz{\'e}kely, and T.~J. Warnow (1999).
\newblock A few logs suffice to build (almost) all trees: Part ii.
\newblock {\em Theoretical Computer Science\/}~{\em 221\/}(1-2), 77--118.

\bibitem[\protect\citeauthoryear{Erd{\H{o}}s, Steel, Sz{\'e}kely, and
  Warnow}{Erd{\H{o}}s et~al.}{1999}]{erdHos1999few}
Erd{\H{o}}s, P.~L., M.~A. Steel, L.~A. Sz{\'e}kely, and T.~J. Warnow (1999).
\newblock A few logs suffice to build (almost) all trees (i).
\newblock {\em Random Structures \& Algorithms\/}~{\em 14\/}(2), 153--184.

\bibitem[\protect\citeauthoryear{Eyre-Walker}{Eyre-Walker}{1998}]{eyre-walker1998parsimony}
Eyre-Walker, A. (1998).
\newblock Problems with parsimony in sequences of biased base composition.
\newblock {\em Journal of Molecular Evolution\/}~{\em 47}, 686--690.

\bibitem[\protect\citeauthoryear{Fan and Roch}{Fan and
  Roch}{2018}]{fan2018necessary}
Fan, W.-T.~L. and S.~Roch (2018).
\newblock Necessary and sufficient conditions for consistent root
  reconstruction in markov models on trees.
\newblock {\em Electronic Journal of Probability\/}~{\em 23}, 1--24.

\bibitem[\protect\citeauthoryear{Faria, Rambaut, Suchard, Baele, Bedford, Ward,
  Tatem, Sousa, Arinaminpathy, P{\'e}pin, et~al.}{Faria
  et~al.}{2014}]{faria2014early}
Faria, N.~R., A.~Rambaut, M.~A. Suchard, G.~Baele, T.~Bedford, M.~J. Ward,
  A.~J. Tatem, J.~D. Sousa, N.~Arinaminpathy, J.~P{\'e}pin, et~al. (2014).
\newblock {The early spread and epidemic ignition of HIV-1 in human
  populations}.
\newblock {\em Science\/}~{\em 346\/}(6205), 56--61.

\bibitem[\protect\citeauthoryear{Felsenstein}{Felsenstein}{2004}]{felsenstein2004book}
Felsenstein, J. (2004).
\newblock {\em Inferring Phylogenies}.
\newblock Massachusets: Sinauer.

\bibitem[\protect\citeauthoryear{Finarelli and Flynn}{Finarelli and
  Flynn}{2006}]{finarelli2006ancestral}
Finarelli, J.~A. and J.~J. Flynn (2006).
\newblock {Ancestral state reconstruction of body size in the Caniformia
  (Carnivora, Mammalia): the effects of incorporating data from the fossil
  record}.
\newblock {\em Systematic Biology\/}~{\em 55\/}(2), 301--313.

\bibitem[\protect\citeauthoryear{Gascuel and Steel}{Gascuel and
  Steel}{2010}]{gascuel2010inferring}
Gascuel, O. and M.~Steel (2010).
\newblock Inferring ancestral sequences in taxon-rich phylogenies.
\newblock {\em Mathematical Biosciences\/}~{\em 227\/}(2), 125--135.

\bibitem[\protect\citeauthoryear{Gascuel and Steel}{Gascuel and
  Steel}{2020}]{gascuel2020darwinian}
Gascuel, O. and M.~Steel (2020).
\newblock {A Darwinian uncertainty principle}.
\newblock {\em Systematic Biology\/}~{\em 69\/}(3), 521--529.

\bibitem[\protect\citeauthoryear{Gaucher, Thomson, Burgan, and Benner}{Gaucher
  et~al.}{2003}]{gaucher2008temperature}
Gaucher, E.~A., J.~M. Thomson, M.~F. Burgan, and S.~A. Benner (2003).
\newblock Inferring the palaeoenvironment of ancient bacteria on the basis of
  resurrected proteins.
\newblock {\em Nature\/}~{\em 425}, 285--288.

\bibitem[\protect\citeauthoryear{Gill, Ho, Baele, Lemey, and Suchard}{Gill
  et~al.}{2017}]{gill2017relaxed}
Gill, M.~S., L.~S.~T. Ho, G.~Baele, P.~Lemey, and M.~A. Suchard (2017).
\newblock A relaxed directional random walk model for phylogenetic trait
  evolution.
\newblock {\em Systematic Biology\/}~{\em 66\/}(3), 299--319.

\bibitem[\protect\citeauthoryear{Gojobori, Li, and Graur}{Gojobori
  et~al.}{1982}]{gojobori1982pseudogenes}
Gojobori, T., W.-H. Li, and D.~Graur (1982).
\newblock Patterns of nucleotide substitution in pseudogenes and functional
  genes.
\newblock {\em Journal of Molecular Evolution\/}~{\em 18}, 360--369.

\bibitem[\protect\citeauthoryear{Graybeal}{Graybeal}{1998}]{graybeal1998taxon}
Graybeal, A. (1998).
\newblock Is it better to add taxa or characters to a difficulty phylogenetic
  problem?
\newblock {\em Systematic Biology\/}~{\em 47\/}(1), 9--17.

\bibitem[\protect\citeauthoryear{Herbst and Fischer}{Herbst and
  Fischer}{2018}]{herbst2018accuracy}
Herbst, L. and M.~Fischer (2018).
\newblock On the accuracy of ancestral sequence reconstruction for ultrametric
  trees with parsimony.
\newblock {\em Bulletin of Mathematical Biology\/}~{\em 80\/}(4), 864--879.

\bibitem[\protect\citeauthoryear{Herbst, Li, and Steel}{Herbst
  et~al.}{2019}]{herbst2019quantifying}
Herbst, L., H.~Li, and M.~Steel (2019).
\newblock Quantifying the accuracy of ancestral state prediction in a
  phylogenetic tree under maximum parsimony.
\newblock {\em Journal of Mathematical Biology\/}~{\em 78\/}(6), 1953--1979.

\bibitem[\protect\citeauthoryear{Ho, Dinh, and Nguyen}{Ho
  et~al.}{2019}]{ho2019multi}
Ho, L. S.~T., V.~Dinh, and C.~V. Nguyen (2019).
\newblock Multi-task learning improves ancestral state reconstruction.
\newblock {\em Theoretical Population Biology\/}~{\em 126}, 33--39.

\bibitem[\protect\citeauthoryear{Kiefer and Wolfowitz}{Kiefer and
  Wolfowitz}{1956}]{kiefer1956consistency}
Kiefer, J. and J.~Wolfowitz (1956).
\newblock Consistency of the maximum likelihood estimator in the presence of
  infinitely many incidental parameters.
\newblock {\em Annals of Mathematical Statistics\/}~{\em 27\/}(4), 887--906.

\bibitem[\protect\citeauthoryear{Lemey, Rambaut, Drummond, and Suchard}{Lemey
  et~al.}{2009}]{lemey2009bayesian}
Lemey, P., A.~Rambaut, A.~J. Drummond, and M.~A. Suchard (2009).
\newblock Bayesian phylogeography finds its roots.
\newblock {\em PLoS Comput Biol\/}~{\em 5\/}(9), e1000520.

\bibitem[\protect\citeauthoryear{Liberles}{Liberles}{2007}]{liberles2007book}
Liberles, D. (2007).
\newblock {\em Ancestral Sequence Reconstruction}.
\newblock Oxford: Oxford University Press.

\bibitem[\protect\citeauthoryear{Maddison}{Maddison}{1994}]{maddison1994ancestral}
Maddison, D.~R. (1994).
\newblock Phylogenetic methods for inferring the evolutionary history and
  processes of change in discretely valued characters.
\newblock {\em Annual Review of Entomology\/}~{\em 39}, 267--292.

\bibitem[\protect\citeauthoryear{Maddison}{Maddison}{1995}]{maddison1995calculating}
Maddison, W.~P. (1995).
\newblock Calculating the probability distributions of ancestral states
  reconstructed by parsimony on phylogenetic trees.
\newblock {\em Systematic Biology\/}~{\em 44\/}(4), 474--481.

\bibitem[\protect\citeauthoryear{Mossel and Steel}{Mossel and
  Steel}{2014}]{mossel2014majority}
Mossel, E. and M.~Steel (2014).
\newblock Majority rule has transition ratio 4 on yule trees under a 2-state
  symmetric model.
\newblock {\em Journal of Theoretical Biology\/}~{\em 360}, 315--318.

\bibitem[\protect\citeauthoryear{Neyman and Scott}{Neyman and
  Scott}{1948}]{neyman1948inconsistency}
Neyman, J. and E.~L. Scott (1948).
\newblock Consistent estimates based on partially consistent estimation.
\newblock {\em Econometrica\/}~{\em 16\/}(1), 1--32.

\bibitem[\protect\citeauthoryear{Odom, Hall, Riebel, Omland, and Langmore}{Odom
  et~al.}{2014}]{odom2014female}
Odom, K.~J., M.~L. Hall, K.~Riebel, K.~E. Omland, and N.~E. Langmore (2014).
\newblock Female song is widespread and ancestral in songbirds.
\newblock {\em Nature Communications\/}~{\em 5\/}(1), 1--6.

\bibitem[\protect\citeauthoryear{Pollock, Zwickl, McGuire, and Hillis}{Pollock
  et~al.}{2002}]{pollock2002taxon}
Pollock, D.~D., D.~J. Zwickl, J.~A. McGuire, and D.~M. Hillis (2002).
\newblock Increased taxon sampling is advantageous for phylogenetic inferenc.
\newblock {\em Systematic Biology\/}~{\em 51\/}(4), 664--671.

\bibitem[\protect\citeauthoryear{Royer-Carenzi, Pontarotti, and
  Didier}{Royer-Carenzi et~al.}{2013}]{royer2013choosing}
Royer-Carenzi, M., P.~Pontarotti, and G.~Didier (2013).
\newblock Choosing the best ancestral character state reconstruction method.
\newblock {\em Mathematical Biosciences\/}~{\em 242\/}(1), 95--109.

\bibitem[\protect\citeauthoryear{Shaw, Dinh, and Matsen}{Shaw
  et~al.}{2019}]{shaw2019jointML}
Shaw, D.~A., V.~C. Dinh, and F.~A. Matsen (2019).
\newblock Joint maximum likelihood of phylogeny and ancestral states is not
  consistent.
\newblock {\em Molecular Biology and Evolution\/}~{\em 36\/}(10), 2352--2357.

\bibitem[\protect\citeauthoryear{Steel and Rodrigo}{Steel and
  Rodrigo}{2008}]{steel2008maximum}
Steel, M. and A.~Rodrigo (2008).
\newblock Maximum likelihood supertrees.
\newblock {\em Systematic Biology\/}~{\em 57\/}(2), 243--250.

\bibitem[\protect\citeauthoryear{Susko and Roger}{Susko and
  Roger}{2013}]{susko2013ancestral}
Susko, E. and A.~J. Roger (2013).
\newblock Problems with estimation of ancestral frequencies under stationary
  models.
\newblock {\em Systematic Biology\/}~{\em 62}, 330--338.

\bibitem[\protect\citeauthoryear{Tuffley and Steel}{Tuffley and
  Steel}{1997}]{tuffley1997links}
Tuffley, C. and M.~Steel (1997).
\newblock Links between maximum likelihood and maximum parsimony under a simple
  model of site substitution.
\newblock {\em Bulletin of Mathematical Biology\/}~{\em 59\/}(3), 581--607.

\bibitem[\protect\citeauthoryear{Yang}{Yang}{1998}]{yang1998taxon}
Yang, Z. (1998).
\newblock On the best evolutionary rate for phylogenetic analysis.
\newblock {\em Systematic Biology\/}~{\em 47}, 125--133.

\bibitem[\protect\citeauthoryear{Zwickl and Hillis}{Zwickl and
  Hillis}{2002}]{zwickl2002taxon}
Zwickl, Derrick, J. and D.~M. Hillis (2002).
\newblock Increased taxon sampling greatly reduces phylogenetic error.
\newblock {\em Systematic Biology\/}~{\em 51\/}(4), 588--598.

\end{thebibliography}
\end{document}